\title{Learning the Expected Core of \\
Strictly Convex Stochastic Cooperative Games}
\author{%
  Nam Phuong Tran \\
  Department of Computer Science\\
  University of Warwick\\
  Coventry, United Kingdom\\
  \texttt{nam.p.tran@warwick.ac.uk} \\
   \And
   The Anh Ta \\
   CSIRO's Data61\\
   Marsfield, NSW, Australia \\
   \texttt{ theanh.ta@csiro.au}\\
   \And 
   Shuqing Shi \\
   Department of Informatics\\
   King’s College London\\
   London, United Kingdom\\
   \texttt{shuqing.shi@kcl.ac.uk}\\
   \And
   Debmalya Mandal\\
   Department of Computer Science\\
   University of Warwick\\
   Coventry, United Kingdom\\
   \texttt{debmalya.mandal@warwick.ac.uk}\\
   \And
   Yali Du\\
   Department of Informatics\\
   King’s College London\\
   London, United Kingdom\\
   \texttt{yali.du@kcl.ac.uk}\\
   \And
   Long Tran-Thanh\\
   Department of Computer Science\\
   University of Warwick\\
   Coventry, United Kingdom\\
    \texttt{long.tran-thanh@warwick.ac.uk}\\
}
\begin{document}

\maketitle
\addtocontents{toc}{\protect\setcounter{tocdepth}{0}}

\begin{abstract}
  Reward allocation, also known as the credit assignment problem, has been an important topic in economics, engineering, and machine learning.
    An important concept in reward allocation is the core, which is the set of stable allocations where no agent has the motivation to deviate from the grand coalition.
    In previous works, computing the core requires either knowledge of the reward function in deterministic games or the reward distribution in stochastic games.
   However, this is unrealistic, as the reward function or distribution is often only partially known and may be subject to uncertainty.
    In this paper, we consider the core learning problem in stochastic cooperative games, where the reward distribution is unknown. 
    Our goal is to learn the expected core, that is, the set of allocations that are stable in expectation, given an oracle that returns a stochastic reward for an enquired coalition each round.
    Within the class of strictly convex games, we present an algorithm named \texttt{Common-Points-Picking} that returns a point in the expected core given a polynomial number of samples, with high probability. 
    To analyse the algorithm, we develop a new extension of the separation hyperplane theorem for multiple convex sets.
\end{abstract}

\section{Introduction}

%

The reward allocation problem is a fundamental challenge in cooperative games that seeks reward allocation schemes to motivate agents to collaborate or satisfy certain constraints, and its solution concepts have recently gained popularity within the machine learning literature through its application in explainable AI (XAI) \cite{Lundberg2017_ShapleyValueXAI, sundararajan20b_ShapleyValueXAI, Gemp2024_ApproximateLeastCoreSGD, Yan2021_LeastCore} and cooperative Multi-Agent Reinforcement Learning (MARL) \cite{Sunehag2017_VDN_CTDE, Foerster2018_COMA_CTDE, Wang2021_SHAQ_ShapleyBellmanOperator}.
In the realm of XAI, designers often seek to understand which factors of the model contribute to the outputs. 
Solution concepts such as the Shapley value \cite{Lundberg2017_ShapleyValueXAI,sundararajan20b_ShapleyValueXAI} and the core \cite{Yan2021_LeastCore,Gemp2024_ApproximateLeastCoreSGD} provide frameworks for assessing the influence that each factor has on the model's behavior.
In cooperative MARL, these solution concepts offer a framework for distributing team rewards to individual agents, promoting cooperation among players, and preventing the occurrence of lazy learner phenomena \cite{Sunehag2017_VDN_CTDE, Foerster2018_COMA_CTDE, Wang2021_SHAQ_ShapleyBellmanOperator}.
A crucial notion of reward allocation is stability, defined as an allocation scheme wherein no agent has the motivation to deviate from the grand coalition.
The set of stable allocations is called \emph{the core of the game}.


In the classical setting, the reward function is deterministic and commonly known among all agents, with no uncertainty within the game. 
However, assuming perfect knowledge of the game is often unrealistic, as the outcome of the game may contain uncertainty. 
This led to the study of stochastic cooperative games, dated back to the seminal works of \cite{Charnes1977_StochasticGame2Stages, Suijs1999_StochasticCoGame}, where stability can be satisfied either with high probability, known as the robust core, or in expectation, known as the expected core. 
However, in these works, the distribution of stochastic rewards is given, allowing agents to calculate the reward before the game starts, which is not practical since the knowledge of the reward distribution may only be partially known to the players. 
When the distribution of the stochastic reward is unknown, the task of learning the stochastic core by sequentially interacting with the environment appears much more challenging.
Early attempts \cite{Pantazis2022_RobustStochasticCore, Pantazis2023_ExpectedCore} studied robust core and expected core learning under full-information, where full information means that the random rewards of all coalitions are observed at each round.
However, full-information feedback is a strong assumption, which does not hold in many real-world situations.
Instead, agents typically observe the reward of their own coalition only (e.g., only know the value of their own action - joining a particular coalition in this case), which is typically known as bandit feedback in the online learning literature.

In our work, we focus on learning the expected core, which circumvents the potential emptiness of the robust core in many practical cases.
Moreover, instead of full-information feedback, where the stochastic rewards of all coalitions are observed each round, we consider the bandit feedback setting, where only the stochastic reward of the inquired coalition is observed each round.
Given the lack of knowledge about the probability distribution of the reward function, learning the expected core using data-driven approaches with bandit feedback is a challenging task.
%

Against this background, the contribution of this paper is three-fold: \textbf{(1)} We focus on expected core learning problem with unknown reward function, and propose a novel algorithm called the \texttt{Common-Points-Picking} algorithm, the first of its kind that is designed to learn the expected core with high probability.
Notably, this algorithm is capable of returning a point in an unknown simplex, given access to the stochastic positions of the vertices, which can also be used in other domains, such as convex geometry.
\textbf{(2)} We establish an analysis for finite sample performance of the \texttt{Common-Points-Picking} algorithm. 
The key component of the analysis revolves around a novel extension of the celebrated hyperplane separation theorem, accompanied by further results in convex geometry, which can also be of independent interest.
\textbf{(3)} We show that our algorithm returns a point in expected core with at least $1-\delta$ probability, using  $\mrm{poly}(n,\log(\delta^{-1}))$ number of samples.


\section{Related Work}

\paragraph{Stochastic Cooperative Games.} The study of stochastic cooperative games can be traced back to at least~\cite{Charnes1977_StochasticGame2Stages}.
The main goal of the allocation scheme is to minimise the probability of objections arising after the realisation of the rewards.
\cite{Suijs1999_StochasticCoGame, Suijs1999_StochasticCoGameProperties} later extended and refined the notion of stability in stochastic settings. 
These seminal works require information about the reward distribution to compute a stable allocation scheme before the game starts.
Stochastic cooperative games have also been studied in a Bayesian setting in a series of papers \cite{Chalkiadakis2004_BayesianCoreRL, Chalkiadakis2007_BayesianCore, Chalkiadakis2011_bookCooperativeGame, Chalkiadakis2012_BayesianCoreRLFormulation}. 
These works develop a framework for Bayesian stochastic cooperative games, where the distribution of the reward is conditioned on a hidden parameter following a prior distribution. 
The prior distribution is common knowledge among agents and can be used to define a new concept of stability called Bayesian core.
In contrast to previous works, our paper focuses on studying scenarios where the reward distribution or prior knowledge is not disclosed to the principal agent and computing a stable allocation requires a data-driven method. 

\paragraph{Learning the Core.} The literature on learning the core through sample-based methods can be categorised based on the type of core one seeks to evaluate.
Two main concepts of the stochastic core are commonly considered, namely the robust core (i.e. core constraints are satisfied with high probability) \cite{Doan2014_RobustCoreStochasticCooperativeGame, Raja2020_RobustStochasticCore, Pantazis2022_RobustStochasticCore} and the expected core (i.e. core constraints are satisfied in expectation) \cite{Chen2009_ExpectedCore, Pantazis2023_ExpectedCore}.
The robust core is defined in a manner that allows the core inequalities to be satisfied with high probability when the reward is realised. However, this definition may lead to the practical issue of an empty robust core, as illustrated in \cite{Chen2009_ExpectedCore}.
To mitigate the potential emptiness of the robust core, we investigate the learnability of the expected core.

The work most closely related to ours is \cite{Pantazis2023_ExpectedCore}, in which the authors introduce an algorithm designed to approximate the expected core using a robust optimization framework. In the context of full information feedback, where rewards for all allocations are revealed each round, the algorithm demonstrates asymptotic convergence to the expected core.
Furthermore, the authors provide an finite-samples error bound for the distance to the expected core.
However, when dealing with bandit feedback, where the reward of the enquired coalition is returned each round, naively applying the algorithm of \cite{Pantazis2023_ExpectedCore} may result in an exponential number of samples in terms of the number of players.
In the bandit feedback setting, as \emph{we later establish in Theorem \ref{Thm: ImposibilityLowDimCore} that, in general cooperative games, no algorithm can learn the expected core with a finite number of samples without additional assumptions}.
This highlights the key difference in dealing with bandit feedback compared to full-information feedback.
Given this limitation, we narrow our focus to (strictly) convex games, an important class of cooperative games where the expected reward function is (strictly) supermodular. By leveraging strict convexity, we introduce the \texttt{Common-Points-Picking} algorithm, which efficiently returns a point in the expected core with high probability using only a polynomial number of samples. While \cite{Pantazis2023_ExpectedCore} proposed a general algorithm applicable to strictly convex games, we argue that it lacks statistical and computational efficiency due to the absence of a mechanism to exploit the supermodular structure of the expected reward function. Further detailed comparison can be found in Appendix \ref{Appendix: Discussion}.


\paragraph{Learning the Shapley Value.} Another relevant line of research is the Shapley approximation, as the Shapley value is the barycenter of the core in convex games. 
Therefore, the approximation error of the Shapley value is an upper bound on the distance between the approximated Shapley value and the expected core \cite{Castro2009_approxShapleySimpleSamplingAsymptotic, Maleki2013_approxShapleyFiniteSample, Simon2020_ApproxShapley_GradDescent}.
It is worth noting that, in contrast to the Shapley approximation method, our algorithm ensures the return of a point in the core.
In comparison, the Shapley approximation approach can only provide an allocation with a bounded distance from the core.

\color{black}

\section{Problem Description}
\subsection{Preliminaries}
\paragraph{Notations.}
For $k \in \mbb N^+$, denote $[k]$ as set $\{1,2,\dots,k\}$.
For $n\in \mbb N^+$, let $\Euclidean^n$ be the $n$-dimensional Euclidean space, and let us denote $\mcal D$ as the Euclidean distance in $\Euclidean^n$.
Denote $\mbf 1_n$ as the vector $[1,...,1] \in \mbb R^n$.
Denote $\AngleBr{\cdot,\cdot}$ as the dot product.
For a set $C$, we denote $C \setminus x$ as the set resulting from eliminating an element $x$ in $C$. 
For $C \subset \Euclidean^n$, let $\Diam(C) := \max_{x, y \in C} \mcal D(x,y)$, and $\Conv{C}$ denote the diameter and the convex hull of $C$, respectively.

Denote $\GroupofPermutation_n := \{\omega : [n] \rightarrow [n] \mid \omega \; \text{is a bijection}\}$ as the permutation group of $[n]$.
For any collection of permutations $\mcal P \subset \GroupofPermutation_n$, we denote $\omega_p,\; p \in [|\mcal P |],$ as $p^{\mrm{th}}$ permutation order in $\mcal P$.
Let $s_i := (i,i+1)$ denote the \textit{adjacent transposition} between $i$ and $i+1$.
Given a set $C$, we denote by $\mcal M(C)$ the space of all probability distributions on $C$.

\paragraph{Stochastic Cooperative Games.}
A \textit{stochastic} cooperative game is defined as a tuple $\RoundBr{N, \mbb P}$, where $N$ is a set containing all agents with number of agents to be $|N|= n$, and 
$\mbb P = \CurlyBr{\mbb P_S \in \mcal M([0,1]) \mid S \subseteq N}$ is the collection of reward distributions with $\mbb P_S$ to be the reward distribution w.r.t. the coalition $S$. 
For any coalition $S\subseteq N$, we denote $\mu(S) := \mbb{E}_{r\sim \mbb P_S}[r]$
as the expected reward of coalition $S$. For a reward allocation scheme $x \in \mbb R^n$, let
$x(S) := \sum_{i\in S} x_i$ as the total reward allocation for players in $S$.
A reward allocation $x$ is \emph{effective} if $x(N) = \mu(N)$.
The hyperplane of all effective reward allocations, denoted by $H_N$, is defined as $H_N = \{x\in \mbb R^n \mid x(N) = \mu(N)\}$.
The convex stochastic cooperative game can be defined as follows: 

\begin{definition}[\textbf{Convex stochastic cooperative game}]
A stochastic cooperative game is convex if the expected reward function is supermodular \cite{Shapley1971_CoreConvexGame}, that is,
    \begin{equation} \label{eq: convex reward}
        \mu\RoundBr{S \cup \{i\}} - \mu(S) \geq \mu\RoundBr{C\cup \{i\}) - \mu(C)}, \quad \forall  i \notin S \cup C \;  \text{and}\;  \forall C \subseteq S \subseteq N.
    \end{equation}
\end{definition} 

Next, we define the notion of strict convexity as follows: 
\begin{definition}[\textbf{$\varsigma$-Strictly convex cooperative game}]
    For some constant $\varsigma>0$, a cooperative game is $\varsigma$-strictly convex if the expected reward function is  $\varsigma$-strictly supermodular \cite{Shapley1971_CoreConvexGame}, that is, $\mu$ is supermodular and
    \begin{equation}
        \mu(S \cup \{i\}) - \mu(S) \geq \mu(C\cup \{i\}) - \mu(C) + \varsigma, \quad \forall  i \notin S \cup C \;  \text{and}\;  \forall C \subseteq S \subseteq N. 
    \end{equation}
\end{definition} 

Following \cite{Pantazis2023_ExpectedCore}, we define the expected core as follows: 
\begin{definition}[\textbf{Expected core} \cite{Pantazis2023_ExpectedCore}]
    The core is defined as 
    \begin{equation*}
        \ECore := \{x \in \mbb R^n \mid x(N) = \mu(N);\; 
        x(S)   \geq \mu(S),\; \forall S\subseteq N  \}.
    \end{equation*}
\end{definition}
That is, the $\ECore$ is the collection of all effective reward allocation schemes $x$ (i.e., schemes where the total allocation adds up to the expected reward of the grand coalition $N$ - see the definition of effective reward allocation above), where 
if any agents deviate from the grand coalition $N$ to form a smaller team $S$, regardless of how they allocate the reward, each individual will not receive more expected reward than if they had stayed in $N$.
Note that, as $\ECore \subset H_N$, its dimension is at most $(n-1)$.
We say that $\ECore$ is \textit{full dimensional} whenever its dimension is $n-1$. 


In convex games, each vertex of the core in the convex game is a marginal vector corresponding to a permutation order \citep{Shapley1971_CoreConvexGame}.
This is a special property of convex games, which plays a crucial role in our algorithm design. 
For any $\Permutation \in \GroupofPermutation_n$, define the marginal vector $\phi^\Permutation \in \mbb R^n$ corresponding to $\Permutation$, that is, its $i^{\mrm{th}}$ entry is 
\begin{equation}
    \phi^\Permutation_i := \mu(P^\Permutation(i)) - \mu(P^\Permutation(i) \setminus i),
\end{equation}
where $P^\Permutation_i = \{j \mid \Permutation(j) \leq \Permutation(i)\}$.
It is known from \cite{Shapley1971_CoreConvexGame} that if the expected reward function is strictly supermodular, the \emph{$\ECore$ must be full dimensional}.


\subsection{Problem Setting} 
In our setting we assume that there is a principal who does not know the reward distribution $\mbb P$.
In each round $t$, the principal queries a coalition $S_t\subset N$.
The environment returns a vector $r_t \sim \mbb P_{S_t}$ independently of the past.
For simplicity, we assume that the agent knows the expected reward of the grand coalition $\mu(N)$. 
Additionally, we assume that the convexity of the game, that is, $\mu$ is supermodular.
Our question is how many samples are needed so that with high probability $1-\delta$, the algorithm returns a point $x \in \text{E-Core}$.
More particularly, we ask whether or not there is an algorithm that can output a point in the $\ECore$, with probability at least $1-\delta$ and the number of queries
\begin{equation}\label{eq: General Conjecture on Sample Complexity}
    T = \mrm{poly}(n, \log(\delta^{-1})).
\end{equation}

As well shall show in Theorem \ref{Thm: ImposibilityLowDimCore}, if $\ECore$ is not full-dimensional, no algorithm can output a point in $\ECore$ with finite samples.
As such, to guarantee the learnability of the $\ECore$. From now on in the rest of this paper, we assume that:
\begin{assumption} \label{assp: strict convexity}
    The game is $\varsigma$-strictly convex.
\end{assumption}
Note that \textit{strict} convexity immediately implies full dimensionality \cite{Shapley1971_CoreConvexGame}, which is not the case with convexity (refer to Section \ref{sec: Analysis}).
As we shall show in the next sections, strict convexity is a sufficient condition allowing the principal to learn a point in $\ECore$ with polynomial number of samples.
Practical examples of strictly convex games can be found in appendix \ref{appendix: example of strict convex game}.


\section{Learning the Expected Core}
In this section, we propose a general-purpose \texttt{Common-Point-Picking} algorithm that is able to return a point of unknown full-dimensional simplex, given an oracle that provides a noisy position of the simplex's verticies.
Under the assumption that the game is strictly convex, we show that, when applying \texttt{Common-Points-Picking} algorithm to the $\varsigma$-strictly convex game, it can return a point in $\ECore$ provided the number of samples is $\mrm{poly}(n,\varsigma)$.

\subsection{Geometric Intuition}

Given $\ECore$ polytope of dimension $(n-1)$. 
In deterministic case when the knowledge of the game is perfect, to compute a point in the core, one can query a marginal vector corresponding to a permutation order $\Permutation \in \GroupofPermutation_n$ \cite{Shapley1971_CoreConvexGame}.
Given that we have uncertainty in the estimation of $\ECore$, this approach is no longer applicable.
The reason is that for each vertex $\phi^{\Permutation}$, we do not precisely compute its position. 
Instead, we only have information on its confidence set $\mcal C(\phi^{\Permutation},\delta)$, a compact $(n-1)$-dimensional set.
The confidence set contains $\phi^{\Permutation}$ with probability at least $(1-\delta)$, as we will define shortly in this section.

One approach to overcome the effect of uncertainty is that we can estimate multiple vertices of the $\ECore$.
Let $\mcal P \subset \GroupofPermutation_n$ be a collection of permutations, and $Q = \CurlyBr{\phi^{\Permutation_p} \mid \Permutation_p \in \mcal P}$ be the set of vertices corresponding to $\mcal P$.   
For brevity, we denote $\mcal C_p := \mcal C(\phi^{\Permutation_p}, \delta),\; \Permutation_p \in \mcal P$.
In this section, we assume that the confidence sets contain the true vertices, that is, $\phi^{\Permutation_p} \in \mcal C_p,\; \forall p\in [|\mcal P|]$, which can be guaranteed with high probability.
It is clear that $\Conv{Q} \subset \ECore$, since $Q$ is a subset of vertices of $\ECore$.
The challenge is that, as the ground truth of the position of the vertex can be any point in the confidence set, we need to ensure that the algorithm outputs a point in the convex hull of any collection of points in the confidence sets.
A sufficient condition to achieve this is that, given $|\mcal P|$ confidence sets $\{\mcal C_p\}_{p\in [|\mcal P|]}$, for each $x^p \in \mcal C_p$,

\begin{equation} \label{eq: Non-empty set of Common point}
    \bigcap_{\substack{x^p \in \mcal C_p \\ p \in [|\mcal P|]}} \Conv{\{x^p\}_{p\in [|\mcal P|]}} \neq \varnothing.
\end{equation}
This condition means that there exists a common point among all the convex hulls formed by choosing any point in confidence sets, $x^p \in \mcal C_p$.
We call the above intersection a set of common points. 
The reason why it is a sufficient condition is that this set is a subset of a ground-truth simplex, implying that it is in the $\ECore$.
Formally, we have
\begin{equation}
    \bigcap_{\substack{x^p \in \mcal C_p \\ p \in [|\mcal P|]}}  \Conv{\{x^p\}_{p\in [|\mcal P|]}} \subset \Conv{Q} \subset \ECore;
\end{equation}

\begin{wrapfigure}{r}{0.5\textwidth}
\vspace{-8mm}
\begin{minipage}{0.5\textwidth}
\begin{algorithm}[H]
\caption{Common Points Picking}
\begin{algorithmic}[1]\label{alg:Common Point}    
    \STATE Input collection of permutation order $\mcal P = \{\Permutation_p\}_{p\in [n]}$.
    \STATE $t = 0,\; \mrm{ep} = 0$, $Q = \varnothing$
    \WHILE {$\texttt{Stopping-Condition}\RoundBr{Q, b_{\mrm{ep}}}$}
        \STATE $\mrm{ep}  \leftarrow \mrm{ep}+1$; 
        \FOR{$p \in [n]$}                   
            \FOR{$i \in [n]$}
                \STATE Query $P^{\Permutation_p}_i$.
                \STATE Orcale returns $r_{\mrm{ep}}\RoundBr{P^{\Permutation_p}_i} \leftarrow r_t$.
                \STATE $t \leftarrow t+1$.
                \STATE Computing $\hat \phi^{\Permutation_p}_i(\mrm{ep})$ as \eqref{eq: empirical mean of marginal contribution wrt sigma}.
            \ENDFOR
        \ENDFOR
        \STATE Assign $Q = \CurlyBr{\hat \phi^{\Permutation_p}(\mrm{ep})}_{p\in [n]}$
        \STATE Compute $b_{\mrm{ep}}$ as \eqref{eq: confidence bonus}
    \ENDWHILE 
    \STATE Return $x^\star = \frac{1}{n}\sum_{\Permutation \in \mcal P} \hat \phi^{\Permutation}(\mrm{ep})$.
\end{algorithmic}
\end{algorithm}
\end{minipage}
\vspace{-20mm}
\end{wrapfigure}

which means that any common point must be in $\ECore$. 
Moreover, the set of common points \emph{is learnable}.
As we shall show in the next section, our algorithm can access the set of common points whenever it is nonempty.

We first state a necessary condition for the number of vertices of $\ECore$ need to estimate for \eqref{eq: Non-empty set of Common point} can be satisfied:
\begin{proposition}\label{prop: the minimum number of points needed}
    Assume that all the confidence sets are full dimensional, that is, $\mrm{dim}(\mcal C_p) = n-1, \; \forall p \in [|\mcal P|]$, and suppose that $|\mcal P| < n$, 
    \begin{equation}
    \bigcap_{\substack{x^p \in \mcal C_p \\ p\in [|\mcal P|]}} \Conv{\{x^p\}_{p\in [|\mcal P|]}} = \varnothing.
\end{equation}
\end{proposition}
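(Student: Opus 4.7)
My plan is to proceed by induction on $k := |\mcal P|$, exploiting a purely dimensional gap: the convex hull of $k$ points has affine dimension at most $k-1$, which is strictly smaller than $n - 1 = \dim(\mcal C_p)$ whenever $k < n$. Intuitively, each $\mcal C_p$ is ``too fat'' to be forced through a common point by merely $k - 1$ representatives, so one can always perturb one of the chosen $x^p$ off the affine flat containing the other $k-1$ chosen points and $y$.

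The base case $k = 1$ is immediate: the intersection reduces to $\bigcap_{x^1 \in \mcal C_1}\{x^1\}$, which is empty because $\mcal C_1$, being $(n-1)$-dimensional with $n \geq 2$, contains more than one point.

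For the inductive step, I would assume the claim for $k-1$ and suppose, for contradiction, that some $y$ lies in every convex hull $\Conv{\{x^p\}_{p \in [k]}}$ with $x^p \in \mcal C_p$. Fix arbitrary $\bar x^p \in \mcal C_p$ for $p = 2, \ldots, k$, and set $A := \mrm{aff}(\{y, \bar x^2, \ldots, \bar x^k\})$, so that $\dim(A) \leq k - 1 \leq n - 2$. I would then split on whether $y \in \Conv{\{\bar x^p\}_{p \geq 2}}$. If not, every representation $y = \lambda_1 x^1 + \sum_{p \geq 2} \lambda_p \bar x^p$ as a convex combination must have $\lambda_1 > 0$; solving for $x^1$ gives $x^1 = \lambda_1^{-1}\bigl(y - \sum_{p \geq 2}\lambda_p \bar x^p\bigr) \in A$, and applying this to every $x^1 \in \mcal C_1$ yields $\mcal C_1 \subseteq A$, contradicting $\dim(\mcal C_1) = n - 1 > \dim(A)$. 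Hence $y$ must lie in $\Conv{\{\bar x^p\}_{p \geq 2}}$ for every admissible tuple $(\bar x^2, \ldots, \bar x^k)$, making $y$ a common point of the $k - 1$ full-dimensional sets $\mcal C_2, \ldots, \mcal C_k$; since $k - 1 < n$, the inductive hypothesis delivers the contradiction.

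The main technical point to handle carefully is the dichotomy inside the inductive step — specifically, verifying that if $y$ is not already in the convex hull of $\{\bar x^p\}_{p \geq 2}$, then every convex-combination representation of $y$ is forced to give strictly positive weight to $x^1$, so $x^1$ is genuinely pinned into the low-dimensional flat $A$. Once this is in place, the rest of the proof is elementary affine-dimension counting combined with the inductive hypothesis, and the ambient space $H_N$ plays no role beyond providing the $(n-1)$-dimensional backdrop in which each $\mcal C_p$ is full-dimensional.
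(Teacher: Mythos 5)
Your proof is correct, but it follows a genuinely different route from the paper's. The paper argues by direct construction: it picks a point in the relative interior of each $\mathcal{C}_p$, observes that $|\mathcal{P}|\le n-1$ points always lie on some $(n-2)$-dimensional hyperplane $H$, translates $H$ by an arbitrarily small amount to a parallel hyperplane $\tilde H$ that (by full dimensionality) still meets the interior of every $\mathcal{C}_p$, and notes that the two resulting convex hulls live in disjoint parallel hyperplanes and hence cannot intersect. Your argument instead proceeds by induction on $|\mathcal{P}|$ with an affine-dimension count: a putative common point $y$ either forces all of $\mathcal{C}_1$ into the at most $(n-2)$-dimensional flat $\mathrm{aff}(\{y,\bar x^2,\dots,\bar x^k\})$ (impossible), or is already a common point of the remaining $k-1$ sets (killed by the inductive hypothesis); the key step, that $y\notin \mathrm{Conv}(\{\bar x^p\}_{p\ge 2})$ forces $\lambda_1>0$ and hence $x^1\in A$ via an affine combination with coefficients summing to one, is verified correctly. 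The trade-off: the paper's proof is shorter and exhibits two explicit admissible tuples witnessing emptiness, but leans on relative interiors and a perturbation; yours needs no perturbation, no convexity or compactness of the $\mathcal{C}_p$ (only that each has full-dimensional affine hull and, for the base case, at least two points), at the price of running an induction. Your dimension-counting device is also close in spirit to the Carath\'eodory-style argument the paper uses later in Lemma~\ref{lem: existence of serparating hyperplane}, so it fits naturally into the paper's toolkit.
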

Proposition \ref{prop: the minimum number of points needed} implies that one needs to estimate at least $n$ vertices to guarantee the existence of a common point.

\subsection{\texttt{Common-Points-Picking} Algorithm}

As Proposition \ref{prop: the minimum number of points needed} suggests, we need to estimate at least $n$ vertices. 
As such, from now on, we assume that $|\mcal P| = n$.
Based on the above intuition, we propose \texttt{Common-Points-Picking}, whose pseudo code is described in Algorithm 
\ref{alg:Common Point}.

The \texttt{Common-Points-Picking} Algorithm can be described as follows.
First, let us explain the notation. 
In epoch $\mrm{ep}$, the variable $r_{\mrm{ep}}\RoundBr{\cdot}$ in the algorithm represents the reward value of the enquired coalition;
$\hat \phi^{\omega_p} (\mrm{ep})$ represents the estimated marginal vector w.r.t. $\omega_p$.
In each epoch $\mrm{ep}$, assuming that the stopping condition is not satisfied, the algorithm estimates the marginal vectors corresponding to the collection of given permutation orders $Q = \CurlyBr{ \hat \phi^{\Permutation_p}(\mrm{ep}) }_{p \in [n]}$ (lines 6-10). 
For each $p \in [n]$, the estimation can be done by querying the value of the nested sequence $P^{\Permutation_p}_1,\; P^{\Permutation_p}_2,\; \dots,\; P^{\Permutation_p}_n$ (line 6-7), then estimating the marginal contribution of each player with respect to the permutation order $\omega_p$ (line 10). 
Next, it calculates the confidence bonus $b_{\mrm{ep}}$ of the confidence sets and checks the stopping condition for the next epoch. 
The algorithm continues until the stopping condition is satisfied, and then returns the average of the most recent values of the marginal vectors corresponding to $\mcal P$.

The termination of the \texttt{Common-Points-Picking} algorithm  is based on the \texttt{Stopping-Condition}  algorithm (Algorithm~\ref{alg: Stopping Condition}), which can be described as follows.
Consider the case where $Q \neq \varnothing$.
For each point $x^p \in Q$, the algorithm attempts to calculate the separating hyperplane $H_p(Q)$, that separates $x^p$ from the rest $Q \setminus x^p$ (line 7).
\begin{wrapfigure}{r}{0.5\textwidth}
\vspace{-8mm}
\begin{minipage}{0.5\textwidth}
\begin{algorithm}[H]
\caption{Stopping Condition}
\begin{algorithmic}[1]\label{alg: Stopping Condition}    
    \STATE Input collection of $n$ points $Q$, confidence bonus $b_{\mrm{ep}}$.
    \STATE Compute $\epsilon_{\mrm{ep}} = 2\sqrt{n} b_{\mrm{ep}}$.
    \IF{$Q = \varnothing$}
        \STATE Return \texttt{FALSE}.
    \ENDIF
    \FOR{ $p \in [n]$ }
        \STATE Computing $H_p(Q)$, i.e., \\
        compute $(v^p \in \mbb R^n, c^p \in \mbb R)$ such that.
            \begin{equation}
                 \label{eq: define the separating hyperplane}
                \begin{cases}
                    \|v^p\|_2 &= 1;  \\
                    \AngleBr{v^p, x} &= c^p + \epsilon_{\mrm{ep}}, \forall x \in Q\setminus x^p. \\
                    \AngleBr{v^p, x^p} &< c^p + \epsilon_{\mrm{ep}}; \\
                    \AngleBr{v^p, \mbf 1_n} &= 0;\\
                \end{cases}
            \end{equation}
       \STATE Computing distance:
         \[h_p := \min_{x: \Norm{x-x^p}_\infty < b_{\mrm{ep}}} c^p - \AngleBr{v^p, x}.  \]
        \IF{ $ h_p  < n\, \epsilon_{\mrm{ep}} $} 
            \STATE Return \texttt{FALSE}. 
        \ENDIF
    \ENDFOR
    \STATE Return \texttt{TRUE}.
\end{algorithmic}
\end{algorithm}
\end{minipage}
\vspace{-8mm}
\end{wrapfigure}
The hyperplane $H_p(Q)$ is defined by two parameters $(v^p \in \mbb R^n, c^p \in \mbb R)$, where $v^p$ is one of its unit normal vectors, together with $c^p$, satisfying Eq.~\eqref{eq: define the separating hyperplane}.
Specifically, the second and third equality in \eqref{eq: define the separating hyperplane} implies that $H_p(Q)$ is parallel and at a distance of $\epsilon_{\mrm{ep}}$ (toward $x^p$) from the hyperplane that passes through all the points in $Q \setminus x^p$.
The fourth equality in \eqref{eq: define the separating hyperplane} guarantees that $H_p(Q)$ is a subset of $H_N$ (as the normal vector of $H_N$ is $\mbf 1_n$).
After computing $H_p(Q)$, the algorithm checks whether the distance from the confidence set $\mathcal{C}_p$ to $H_p(Q)$ is large enough (lines 8-12). 
The stopping condition checks for all $p \in [n]$; if no condition is violated, then the algorithm returns \texttt{TRUE}.
An example of the construction of separating hyperplanes in $H_N$, where $n=3$ is depicted in Figure \ref{fig: common points constructed by separating hyperplanes}.

\begin{wrapfigure}{R}{0.5\textwidth}
\vspace{-4mm}
    \centering
    \includegraphics[width=0.9\linewidth]{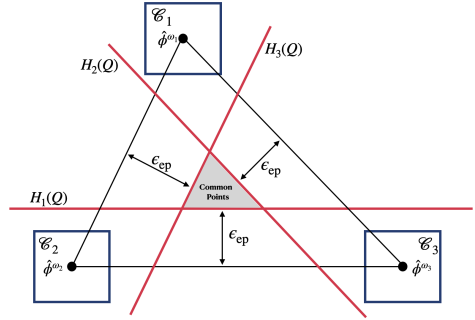}
    \caption{Set of common points constructed by separating hyperplanes. 
    The intersection of half-spaces defined by $H_p(Q), \; \forall p\in [n],$ creates a subset of common points. 
    The common points are in $\ECore$, provided that the confidence sets contain the ground-truth vertices.}
    \label{fig: common points constructed by separating hyperplanes}
    \vspace{-8mm}
\end{wrapfigure}

Note that the input of algorithm $\mcal P$ can be any collection of permutation orders such that $|\mcal P| = n$. 
In the next section, we will provide instances of the collection of permutation orders, in which, under Assumption \ref{assp: strict convexity}, the algorithm can output a point in $\ECore$ with high probability and a polynomial number of samples.

\begin{remark}
Most of the computational burden lies in computing the separating hyperplane $H_p(Q)$ for each $p$ (line 7), and calculating the distance between the confidence set $\mcal C_p$ and $H_p(Q)$ (line 8) in \texttt{Stopping Condition}.
Since all tasks can be completed within polynomial time w.r.t. $n$, our algorithm is polynomial.
\end{remark}

There are two challenges regarding the \texttt{Common-Points-Picking} algorithm. 
First, we need to design confidence sets such that they contain the vertices with high probability, which can be done easily using Hoeffding's inequality. 
Second, we need to define a stopping condition that can guarantee the non-emptiness of the common set and output a point in the common set with a polynomial number of samples. 
The second question is \emph{more involved and requires proving new results in convex geometry}, including an extension of the hyperplane separation theorem, as we shall fully explain in Section~\ref{subsec: stopping condition}.
\paragraph{Confidence Set}
To calculate this set we will use the probability concentration inequality to obtain a confidence set:
First, let $r_{\mrm{ep}}\RoundBr{\varnothing} = 0,\; \forall\, \mrm{ep}>0 $,  define the empirical marginal vector w.r.t. permutation $\Permutation$ as $\hat \phi^{\Permutation} \in \mbb R^n$ at epoch $\mrm{ep}$ as
\begin{equation} \label{eq: empirical mean of marginal contribution wrt sigma}
   \hat \phi^{\Permutation}_i(\mrm{ep}) = \frac{1}{\mrm{ep}}\sum_{s=1}^\mrm{ep} r_s\RoundBr{P^\Permutation_i} -  r_s\RoundBr{P^\Permutation_i \setminus i}. 
\end{equation}
By the Hoeffding's inequality, one has that after $\mrm{ep}$ epochs, $\forall \Permutation \in \mcal P$, for each $i\in [n]$, with probability at least $1-\delta$, $\phi^{\Permutation}$ must belong to the following set:
\begin{equation}\label{eq: confidence bonus}
    \mcal C(\phi^{\Permutation},\delta) = \CurlyBr{x \in H_N \bigg| \Norm{x-\hat \phi^{\Permutation}}_{\infty} \leq b_{\mrm{ep}} };\quad \text{s.t.} \quad b_{\mrm{ep}}:=\sqrt{\frac{2\log(n\,\mrm{ep}\,\delta^{-1})}{\mrm{ep}} }.
\end{equation}

\section{Main Results}
\label{sec: Analysis}
Before proceeding to the analysis of Algorithm \ref{alg:Common Point}, let us exclude the case where learning a stable allocation is not possible, thereby emphasizing the need of the strict convexity assumption.

\begin{theorem}\label{Thm: ImposibilityLowDimCore}
    Suppose that E-Core has dimension $k < n-1$, for any $0.2 >\delta >0$ and with finite samples, no algorithm can output a point in E-Core with probability at least $1-\delta$.
\end{theorem}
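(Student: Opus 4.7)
}
I would prove the impossibility by a Le~Cam two-point argument built on a decomposable base game: the structural observation is that a low-dimensional $\ECore$ forces a tight complementary constraint whose unknown location an algorithm cannot resolve from finite samples. The plan has three ingredients that I would carry out in order.

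First, a structural step: I would verify that $\mrm{dim}(\ECore) < n-1$ in a convex game forces the existence of a non-trivial complementary pair $\varnothing \neq S^\star \subsetneq N$ with $\mu(S^\star) + \mu(N \setminus S^\star) = \mu(N)$. Combined with the known identity $x(N) = \mu(N)$, this pins down the equality $x(S^\star) = \mu(S^\star)$ on every core point, so any admissible output has to match the unknown scalar $\mu(S^\star)$ exactly. Second, I would build two nearby admissible instances: starting from a decomposable base $\mu(S) = \mu_A(S \cap S^\star) + \mu_B(S \cap (N \setminus S^\star))$ with strictly supermodular components, I would construct $\mu_\epsilon$ by raising $\mu_A$'s value at $S^\star$ by $\epsilon$ and lowering $\mu_B$'s value at $N \setminus S^\star$ by $\epsilon$, leaving every other sub-coalition value intact. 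Direct verification shows that, for $\epsilon$ smaller than the strict-supermodular gap of $\mu_A$ and $\mu_B$, the game $\mu_\epsilon$ remains convex, $\mu_\epsilon(N) = \mu(N)$, and $\ECore(\mu_\epsilon)$ sits on the hyperplane $\CurlyBr{x \mid x(S^\star) = \mu(S^\star) + \epsilon}$, disjoint from $\ECore(\mu)$.

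Third, the Le~Cam step: fix any (possibly adaptive, randomized) algorithm $A$ with finite sample budget $T$ and denote its output by $x^\star_T$. Using the test event $E = \CurlyBr{x^\star_T(S^\star) \leq \mu(S^\star) + \epsilon/2}$, success on both instances with probability at least $1-\delta$ would force $\mrm{TV}(\mbb P_\mu^{A,T}, \mbb P_{\mu_\epsilon}^{A,T}) \geq 1-2\delta > 0.6$ whenever $\delta < 0.2$. Instantiating the rewards as Bernoulli (or any $[0,1]$-bounded family) with means bounded away from the endpoints, only coalitions containing $S^\star$ or $N \setminus S^\star$ have perturbed means, and each contributes a per-query $\mrm{KL}$ of order $\epsilon^2$, so the KL chain rule gives $\mrm{KL}(\mbb P_\mu^{A,T} \,\|\, \mbb P_{\mu_\epsilon}^{A,T}) \leq C T \epsilon^2$. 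Pinsker's inequality then yields $\mrm{TV} \leq \sqrt{C T \epsilon^2/2}$, and choosing $\epsilon = c/\sqrt{T}$ with small enough $c$ contradicts the lower bound for any finite $T$.

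\emph{The hardest part} will be the second step: generic perturbations of $\mu(S^\star)$ and $\mu(N \setminus S^\star)$ break supermodularity at other coalition pairs that may already be tight, which is typical in low-dim-core games such as the purely modular one. Working with a decomposable base cleanly decouples the two halves of the partition so that the $\epsilon$-shift only interacts with the supermodularity inequalities internal to $\mu_A$ and $\mu_B$, which can absorb the shift thanks to their strict-supermodular slack. The structural first step is also non-trivial but standard: via the LP representation of $\ECore$, an inequality that is tight over the whole core forces its complement to be tight, yielding the pair $(S^\star, N \setminus S^\star)$ on which $\mu$ acts additively.
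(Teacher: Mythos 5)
Your proposal is correct and follows the same overall strategy as the paper: exhibit two convex games whose reward distributions differ by $O(\epsilon)$ in mean but whose low-dimensional cores lie on parallel hyperplanes $\epsilon$ apart, then run a two-point information-theoretic argument (per-round KL of order $\epsilon^2$, chain rule, and a testing inequality) to show no finite-sample algorithm can land in the right core with probability $1-\delta$ for $\delta<0.2$. The differences are in packaging rather than substance. For the hard pair, the paper starts from a full-dimensional $\varsigma$-strictly convex game, perturbs the value of one coalition $C$ by $\varepsilon$, and passes to the \emph{face games} $G_0(C), G_1(C)$ of \cite{Julio_CoreConvexGame}, whose cores are the facets $F_C(G_0), F_C(G_1)$ sitting on the parallel hyperplanes $H_{N\setminus C}$; you instead build decomposable games $\mu(S)=\mu_A(S\cap S^\star)+\mu_B(S\cap(N\setminus S^\star))$ directly and shift the component values at the top sets by $\pm\epsilon$. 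These are essentially the same object --- the face game satisfies exactly the decomposition identity \eqref{eq: decompose face game} --- but your version is more self-contained (no face-game citation needed), and your observation that strict supermodularity of the components absorbs the $\epsilon$-shift is precisely the care the paper's Remark \ref{remark: arbitrary interior of the ECore} alludes to. Your Step 1 (every convex game with $\mrm{dim}(\ECore)<n-1$ admits a complementary tight pair $S^\star, N\setminus S^\star$) is a correct structural bonus the paper does not prove, though it is not needed for the existential reading of the theorem; and your use of Le Cam plus Pinsker in place of the Kleinberg-et-al.\ exponential KL lemma is an equivalent finish. One point to make explicit when writing this up: the $\epsilon=c/\sqrt{T}$ choice means the hard pair depends on the sample budget $T$, so the statement being proved is ``for every algorithm and every finite $T$ there is a low-dimensional-core instance on which it fails,'' which is also exactly what the paper's proof establishes.
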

    
The proof of Theorem \ref{Thm: ImposibilityLowDimCore} employs an information-theoretic argument.
In particular, we construct two game instances with low-dimensional cores, utilising the concept of face games introduced in \cite{Julio_CoreConvexGame}.
The construction of the two games is designed to ensure that, despite the KL distance between their reward distributions being arbitrarily small, their low-dimensional cores are parallel and maintain a positive distance from each other. 
Consequently, their cores have empty intersections.
This implies that, given finite samples, no algorithm can reliably distinguish between the two games. 
Since these games lack a mutually stable allocation, if an algorithm fails to differentiate between them, it is likely to choose the wrong core with a certain probability.

It is worth noting that convex games may have a low-dimensional core, as demonstrated in the following example.
\begin{example}\label{Eg: One point core}
Let $\mu(S) = |S|$ for all $S\subseteq N$.
It is easy to verify that $\mu$ is indeed convex. 
The marginal contribution of any player $i$ to any set $S \subseteq N$ is 
\begin{equation}
    \mu(S \cup i) - \mu(S) = 1,\; \forall S\subset N. 
\end{equation}
Therefore, the only stable allocation is $\mbf 1_n$, which coincides with the Shapley value.
Hence, the core is one-point set.
According to Theorem \ref{Thm: ImposibilityLowDimCore}, since the core has a dimension of $0$ in this case, it is impossible to learn a stable allocation with a finite number of samples.
\end{example}
Example \ref{Eg: One point core} suggests that convexity alone does not ensure the problem's learnability, emphasizing the requirement for strict convexity.

\subsection{On the Stopping Condition}
\label{subsec: stopping condition}
In this subsection, we explain the construction of the stopping condition in Algorithm \ref{alg: Stopping Condition}. 
We will show that the stopping condition can always be satisfied with the number of samples needed polynomially w.r.t the width of the ground truth simplex.
Intuitively, the confidence sets need to shrink to be sufficiently small, relative to the width of the simplex, to guarantee the existence of a common point.

To simplify the presentation, we restrict our attention to $H_N$ and consider it as $\Euclidean^{n-1}$. First, we state a necessary condition for the existence of common points. 
\begin{proposition} \label{prop: Imposibility Colinear CS}
Suppose there is a $(n-2)$-dimensional hyperplane that intersect with all the interior of confidence sets $\mcal C_p, \; \forall p \in [n]$ , then common points do not exist.
\end{proposition}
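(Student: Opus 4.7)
The plan is to give a short constructive argument that exhibits two specific selections of vertices whose convex hulls are separated by the given hyperplane, thereby forcing the total intersection defining the common-point set to be empty.

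First I would set up the geometry inside $H_N$, which is identified with $\mathbb{E}^{n-1}$. By hypothesis there is a $(n-2)$-dimensional hyperplane $H\subset H_N$ meeting $\mathrm{int}(\mathcal{C}_p)$ for every $p\in[n]$. Write $H^+$ and $H^-$ for the two open half-spaces of $H_N$ determined by $H$. Since each $\mathcal{C}_p$ has nonempty interior in $H_N$ (it is the intersection of a full-dimensional $\ell_\infty$-ball with $H_N$) and $H$ hits that interior, a small open ball around the intersection point lies inside $\mathcal{C}_p$, and this ball in turn contains points of both $H^+$ and $H^-$. Hence for every $p\in[n]$ I can pick
\[
    x^p_+ \in \mathcal{C}_p\cap H^+, \qquad x^p_- \in \mathcal{C}_p\cap H^-.
\]

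Next I would exploit that $H^+$ and $H^-$ are convex and that the common-points set is an intersection over \emph{all} admissible selections. Consider the two particular selections $(x^1_+,\dots,x^n_+)$ and $(x^1_-,\dots,x^n_-)$. By convexity of the half-spaces,
\[
    \mathrm{Conv}\{x^p_+\}_{p\in[n]} \subset H^+, \qquad \mathrm{Conv}\{x^p_-\}_{p\in[n]} \subset H^-.
\]
Since $H^+\cap H^-=\varnothing$, the intersection of these two particular convex hulls is already empty. A fortiori,
\[
    \bigcap_{\substack{x^p\in\mathcal{C}_p\\ p\in[n]}}\mathrm{Conv}\{x^p\}_{p\in[n]} \subset \mathrm{Conv}\{x^p_+\}_{p\in[n]} \cap \mathrm{Conv}\{x^p_-\}_{p\in[n]} = \varnothing,
\]
which is the desired conclusion.

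Because the argument reduces to choosing one vertex from each confidence set on each side of $H$, there is no real obstacle: the only delicate point is verifying that $H$ meeting $\mathrm{int}(\mathcal{C}_p)$ forces $\mathcal{C}_p$ to contain points strictly on both sides of $H$. I would handle this via a small open Euclidean ball around the intersection point, which is split into two nonempty pieces by $H$ because $H$ is a codimension-one affine subspace of $H_N$. Everything else is immediate from convexity.
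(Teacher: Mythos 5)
Your proof is correct and follows essentially the same strategy as the paper's: exhibit two admissible selections of points, one per confidence set, whose convex hulls lie in disjoint convex regions determined by $H$, so the intersection over all selections is empty. The only cosmetic difference is that you place the two selections in the two open half-spaces $H^+$ and $H^-$ (justified by the small-ball argument around each interior intersection point), whereas the paper places them on $H$ itself and on a nearby parallel hyperplane $\tilde H$; both variants are valid.
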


Proposition \ref{prop: Imposibility Colinear CS} suggests that if the ground truth simplex $\Conv{Q}$ is not full-dimensional, then the common set is empty. 
In addition, even if $\Conv{Q}$ is full-dimensional, but the confident sets are not sufficiently small, one can also create a hyperplane that intersects with all the confidence sets. For example, when the intersection of the confidence sets is not empty.

On the other hand, when the confidence sets are well-arranged and sufficiently small, that is, there does not exist a hyperplane that intersects with all of them, a nice separating property emerges, as stated in the next theorem. 
This \emph{new result can be considered as an extension of the classic separating hyperplane theorem} \cite{boyd2004convex_COnvexOpt}.
First, let us recap the notion of separation as follows.

\begin{definition}[\textbf{Separating hyperplane}]
Let $C$ and $D$ be two compact and convex subsets of $\Euclidean^{n-1}$.
Let $H$ be a hyperplane defined by the tuple $(v,c)$, where $v$ is a unit normal vector and $c$ is a real number, such that $\AngleBr{x, v} = c,\; \forall x \in H$.
We say $H$ separates $C$ and $D$ if $\AngleBr{x, v} > c,\; \forall x\in C; \; \text{and} \; \AngleBr{y, v} < c, \; \forall y\in D.$
\end{definition}

\begin{theorem}[\textbf{Hyperplane separation theorem for multiple convex sets}]\label{theorem: existence of serparating hyperplane 2}
   Assume that $\{\mcal C_p\}_{p\in [n]}$ are mutually disjoint compact and convex subsets in $ \Euclidean^{n-1}$. Suppose that there does not exist a $(n-2)$-dimensional hyperplane that intersects with confidence sets $\mcal C_p, \; \forall p\in [n]$,  then for each $p \in [n]$, there exists a hyperplane that separates $\mcal C_p$ from  $\underset{q\neq p}{\bigcup} \mcal C_q$. 
\end{theorem}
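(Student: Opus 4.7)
The plan is a proof by contradiction: fix $p\in[n]$ and suppose that no hyperplane strictly separates $\mcal C_p$ from $\bigcup_{q\neq p}\mcal C_q$; I will construct a $(n-2)$-dimensional hyperplane meeting every $\mcal C_q$, contradicting the standing hypothesis. Let $K:=\Conv{\bigcup_{q\neq p}\mcal C_q}$, which is compact and convex. Linearity of the inner product makes strict separation of $\mcal C_p$ from $\bigcup_{q\neq p}\mcal C_q$ equivalent to strict separation of the two compact convex sets $\mcal C_p$ and $K$; since the latter is always achievable when they are disjoint, the failure of separation forces $\mcal C_p\cap K\neq\varnothing$, and I pick a witness $x\in\mcal C_p\cap K$.

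Next I apply Carath\'eodory's theorem in $\Euclidean^{n-1}$ to write $x=\sum_{j=1}^m\lambda_j y_j$ with $y_j\in\mcal C_{q_j}$, $q_j\in[n]\setminus\{p\}$, $\lambda_j>0$, $\sum_j\lambda_j=1$, and $m\leq n$. For each index $q$ appearing among the $q_j$'s, convexity of $\mcal C_q$ lets me replace the $y_j$'s with $q_j=q$ by their weighted centroid $z_q\in\mcal C_q$, yielding
\begin{equation*}
x=\sum_{q\in Q}\mu_q z_q,\qquad z_q\in\mcal C_q,\ \mu_q>0,\ \sum_{q\in Q}\mu_q=1,
\end{equation*}
with $Q\subseteq[n]\setminus\{p\}$ and $|Q|\leq n-1$. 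Mutual disjointness of the $\mcal C_q$'s rules out $|Q|=1$ (else $x=z_q\in\mcal C_p\cap\mcal C_q$), so $|Q|\geq 2$. Set $A:=\mrm{aff}(\{z_q:q\in Q\})$; since $x$ is a convex combination of the $z_q$'s, $x\in A$ and $\dim A\leq|Q|-1$.

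Let $R:=[n]\setminus(Q\cup\{p\})$ and write $k:=|R|=n-1-|Q|$. If $k=0$, any $(n-2)$-dimensional hyperplane containing $A$ meets $\mcal C_p$ at $x$ and each $\mcal C_q$, $q\in Q$, at $z_q$, so it already transverses all $n$ sets, contradicting the hypothesis. If $k\geq 1$ I build such a transversal through a quotient argument. Translate coordinates so $A$ passes through the origin, and let $\pi:\Euclidean^{n-1}\to W:=\Euclidean^{n-1}/A$; hyperplanes of $\Euclidean^{n-1}$ containing $A$ correspond bijectively to linear hyperplanes of $W$. A short count gives $\dim W=(n-1)-\dim A\geq k+1$. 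For each $r\in R$ pick any $z_r\in\mcal C_r$ and set $w_r:=\pi(z_r)\in W$; the $k+1$ vectors $\{0\}\cup\{w_r:r\in R\}$ span a linear subspace of $W$ of dimension at most $k\leq\dim W-1$, hence lie inside some linear hyperplane $\tilde H$ of $W$. Its preimage $H:=\pi^{-1}(\tilde H)$ is an $(n-2)$-dimensional hyperplane of $\Euclidean^{n-1}$ containing $A$ and each $z_r$, therefore meeting every $\mcal C_q$, $q\in[n]$, the desired contradiction.

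The main delicate step is the dimension bookkeeping in the last paragraph: the inequality $\dim W\geq|R|+1$ is precisely what lets $|R|$ representatives together with the origin sit inside a single linear hyperplane of the quotient. Disjointness plays a subtle but essential role here, since it forces $|Q|\geq 2$ and $Q\subseteq[n]\setminus\{p\}$, together trimming the Carath\'eodory budget by one dimension and yielding exactly the slack needed; loosening either input would collapse the construction of $H$ by one dimension and the argument would fail.
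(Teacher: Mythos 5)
Your proposal is correct and follows essentially the same route as the paper: reduce strict separation of $\mcal C_p$ from the union to separation from its convex hull, use Carath\'eodory plus convexity to collapse the representation of a common point onto one centroid per set, and then pad with arbitrary points from the remaining sets to build an $(n-2)$-dimensional transversal hyperplane, contradicting the hypothesis. Your quotient-space bookkeeping is just a more formal rendering of the paper's count that $n-1$ points in $\Euclidean^{n-1}$ always lie on a common hyperplane containing the affine hull of the centroids.
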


\begin{remark}[\textbf{Non-triviality of Theorem \ref{theorem: existence of serparating hyperplane 2}}]
    At a first glance, Theorem \ref{theorem: existence of serparating hyperplane 2} may appear as a trivial extension of the classic hyperplane separation theorem due to the following reasoning: Consider the union of all hyperplanes that intersect $\bigcup_{q\neq p} \mcal C_q$, which trivially contains $\bigcup_{q\neq p} \mcal C_q$. Then, by assuming that these hyperplanes do not intersect $\mcal C_p$, the separation between $\mcal C_p$ and $\bigcup_{q\neq p} \mcal C_q$ appears to follow from the classic separation hyperplane theorem.
    However, there is a flaw in the above reasoning: The union of these hyperplanes is \emph{not necessarily convex}. Therefore, the classic separation hyperplane theorem cannot be applied directly.
    Instead, employing Carathéodory's theorem, we prove in Theorem~\ref{theorem: existence of serparating hyperplane 2} by contra-position that if the intersection between $\mcal C_p$ and $\mrm{Conv}(\bigcup_{q\neq p} \mcal C_q)$ is non-empty, then we can construct a low-dimensional hyperplane that intersects with all the set.
    \end{remark}

When those confidence sets are well-separated, we can provide  a sufficient condition for that the common points exist.
Let $H_p$ be a separating hyperplane that separate $\mcal C_p$ from $\bigcup_{q\neq p} \mcal C_q$. 
We define $H_p$ corresponding with tuple $(v^p, c^p)$.
%
%
Now, denote $\HalfSpace_p = \CurlyBr{x \in  \Euclidean^{n-1} \mid \AngleBr{v^p, x} < c^p}$ as the half space containing $\mcal C_p$. We have that:

\begin{lemma}\label{lem: separating hyperplanes arrangment 2}
    For any $x^p \in \mcal C_p$, $p\in [n]$,
    \begin{equation}
        \underset{p\in [n]}{\bigcap} \HalfSpace_p \subseteq \Conv{\CurlyBr{x^p}_{p\in [n]}}.
    \end{equation}
    Consequently, if $\underset{p\in [n]}{\bigcap} \HalfSpace_p$ is nonempty, it is the subset of common points.
\end{lemma}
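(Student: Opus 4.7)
The plan is to analyse the barycentric coordinates of $y$ with respect to $\{x^p\}_{p\in[n]}$ and use the separation property, sharpened by the quantitative gap implied by the construction of $H_p$, to argue that every coordinate is non-negative.

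First, the hypothesis of Theorem~\ref{theorem: existence of serparating hyperplane 2} forbids any $(n-2)$-dimensional hyperplane from meeting every $\mcal C_p$, so any selection $\{x^p\}$ with $x^p \in \mcal C_p$ must be affinely independent in $\Euclidean^{n-1}$ (otherwise its affine hull would be such a forbidden hyperplane). Thus $\{x^p\}$ forms a simplex and $y$ admits a unique barycentric decomposition $y = \sum_q \lambda_q x^q$ with $\sum_q \lambda_q = 1$; the lemma reduces to showing $\lambda_q \geq 0$ for every $q$. Introducing the affine functional $\ell_p(z) := \AngleBr{v^p, z} - c^p$, the definition of $\HalfSpace_p$ and the separation of $\mcal C_p$ from $\bigcup_{q\neq p}\mcal C_q$ yield $\ell_p(y) < 0$, $\ell_p(x^p) < 0$, and $\ell_p(x^q) > 0$ for $q \neq p$; affinity of $\ell_p$ combined with $\sum_q \lambda_q = 1$ produces the master identity $\ell_p(y) = \sum_q \lambda_q\, \ell_p(x^q)$.

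The technical heart is a quantitative sharpening of this sign pattern. Writing $\hat\phi^{\omega_q}$ for the centre of $\mcal C_q$ (i.e.\ the $q$-th vertex of $Q$), the algorithmic form of $H_p$ in Eq.~\eqref{eq: define the separating hyperplane} forces $\AngleBr{v^p, \hat\phi^{\omega_q}} = c^p + \epsilon_{\mrm{ep}}$ for $q \neq p$; combined with the perturbation estimate $|\AngleBr{v^p, x - \hat\phi^{\omega_q}}| \leq \sqrt{n}\, b_{\mrm{ep}}$ on $\mcal C_q$ (from $\|v^p\|_2 = 1$ and $\|x - \hat\phi^{\omega_q}\|_\infty \leq b_{\mrm{ep}}$), the stopping inequality $h_p \geq n\, \epsilon_{\mrm{ep}}$, and $\epsilon_{\mrm{ep}} = 2\sqrt{n}\, b_{\mrm{ep}}$, one upgrades the sign pattern to the gap bounds $\ell_p(x^p) \leq -n\,\epsilon_{\mrm{ep}}$ and $\ell_p(x^q) \in [\epsilon_{\mrm{ep}}/2,\ 3\epsilon_{\mrm{ep}}/2]$ for $q \neq p$. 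Now suppose for contradiction that $S^- := \{q : \lambda_q < 0\}$ is non-empty, and set $S^+ := \{q : \lambda_q > 0\}$, $M := \sum_{q\in S^-}|\lambda_q|$, $P := \sum_{q\in S^+} \lambda_q$, so that $P = 1 + M$. Choosing $p_0 \in \arg\max_{q\in S^-} |\lambda_q|$ and lower-bounding the three contributions to $\ell_{p_0}(y) = \sum_q \lambda_q\, \ell_{p_0}(x^q)$ (diagonal, remaining $S^-$, and $S^+$) using the gap bounds gives $\ell_{p_0}(y) \geq \epsilon_{\mrm{ep}}\bigl[(n+\tfrac{3}{2})|\lambda_{p_0}| - M + \tfrac{1}{2}\bigr]$. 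Combined with $\ell_{p_0}(y) < 0$ and $|\lambda_{p_0}| \geq M/|S^-|$, this forces $|S^-| > n + \tfrac{3}{2}$, contradicting $|S^-| \leq n$; the edge case $M \leq \tfrac{1}{2}$ is immediate since the lower bound is already non-negative. Hence $\lambda_q \geq 0$ for all $q$ and $y \in \Conv{\{x^p\}}$.

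The hard part is the last step: the raw sign pattern $\ell_p(x^p) < 0$ and $\ell_p(x^q) > 0$ is too weak on its own (abstract separating hyperplanes can be arranged for which $\bigcap_p \HalfSpace_p$ escapes $\Conv{\{x^p\}}$), so the argument must exploit the diagonal-dominance-like gap $|\ell_p(x^p)| \geq n\, \epsilon_{\mrm{ep}}$ supplied by the stopping condition against the uniform ceiling $\ell_p(x^q) \leq 3\epsilon_{\mrm{ep}}/2$ for $q \neq p$ coming from the facet-aligned construction of $H_p$ and the perturbation on $\mcal C_q$; concentrating the estimate at the single index $p_0$ where $|\lambda_q|$ is maximised on $S^-$ is what converts this gap into the contradiction $|S^-| > n$.
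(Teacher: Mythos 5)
Your proof is correct, and it takes a genuinely different route from the paper's --- one that is in fact \emph{necessary}, because the paper's own argument has a gap. The paper proves the lemma by contraposition: writing $\hat x = \sum_p \lambda_p x^p$ with $\sum_p \lambda_p = 1$ and picking some $\lambda_k<0$, it asserts $\AngleBr{v^k,\hat x} > c^k$ via the step $\sum_{p\neq k}\lambda_p\AngleBr{v^k,x^p} \geq c^k\sum_{p\neq k}\lambda_p$; that step requires $\lambda_p\geq 0$ for every $p\neq k$, which fails whenever two or more barycentric coordinates of $\hat x$ are negative. The statement is genuinely false for \emph{arbitrary} separating hyperplanes: in the plane take $x^1=(0,1)$, $x^2=(-1,0)$, $x^3=(1,0)$ with separating lines $y=0.25x+0.375$, $x=-0.5$, $y=0.04x+0.03$; the point $(-3,-0.1)$ (barycentric coordinates $(-0.1,\,2.05,\,-0.95)$) lies in all three half-spaces $\HalfSpace_p$ yet outside the triangle, and $\bigcap_p \HalfSpace_p$ is even unbounded, contradicting the remark following the lemma. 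You correctly diagnose that the raw sign pattern is too weak, and you instead prove the version of the lemma that the algorithm actually needs: for the facet-parallel hyperplanes of \eqref{eq: define the separating hyperplane} under the stopping condition $h_p\geq n\epsilon_{\mrm{ep}}$. Your quantitative bounds $\ell_p(x^p)\leq -n\epsilon_{\mrm{ep}}$ and $\ell_p(x^q)\in[\epsilon_{\mrm{ep}}/2,\,3\epsilon_{\mrm{ep}}/2]$ for $q \neq p$ follow correctly from that construction together with $\|x-\hat\phi^{\omega_q}\|_2\leq\sqrt n\, b_{\mrm{ep}} = \epsilon_{\mrm{ep}}/2$ on $\mcal C_q$, and your concluding estimate $\ell_{p_0}(y)\geq\epsilon_{\mrm{ep}}\RoundBr{(n+\tfrac32)|\lambda_{p_0}|-M+\tfrac12}$, combined with $|\lambda_{p_0}|\geq M/|S^-|$ and $|S^-|\leq n-1$, does force the bracket to be strictly positive, yielding the contradiction with $\ell_{p_0}(y)<0$. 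What your approach buys is a sound proof of the containment that Theorem \ref{thm: Distance And Diam} relies on; what it costs is generality --- the lemma becomes a statement about the algorithm's hyperplanes plus the stopping margin rather than about abstract separation. This is essentially the repair the paper needs: its condition $\zeta_{pp}>\sum_{q\neq p}\zeta_{pq}$ from the proof of Theorem \ref{thm: Distance And Diam} must be folded \emph{into} the proof of this lemma (as you do), rather than invoked only after the lemma has been applied.
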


From Lemma \ref{lem: separating hyperplanes arrangment 2}, as $\bigcap_{p\in [n]} \HalfSpace_p $ is the subset of any simplex defined by a set of points in the confidence sets, $\bigcap_{p\in [n]} \HalfSpace_p $ must be either empty or bounded subset of $\Euclidean^{n-1}$.
The \emph{key implication here is that Lemma \ref{lem: separating hyperplanes arrangment 2} provides us a method to find a point in the common set}. An example of Lemma \ref{lem: separating hyperplanes arrangment 2} in $\Euclidean^2$ is illustrated in Figure \ref{fig: common points constructed by separating hyperplanes}. 

Now, the main question is under what conditions $\bigcap_{p\in [n]} \HalfSpace_p $ is nonempty.
Next we show that the nonemptiness of $\bigcap_{p\in [n]} \HalfSpace_p $ can be guaranteed if the diameter of the confidence sets is sufficiently small. 
This establishes a condition for the number of samples required by the algorithm.
\begin{theorem}\label{thm: Distance And Diam}
Given a collection of confident set $\{\mcal C_p\}_{p\in [n]}$ and let $Q = \{x^p\}_{p\in [n]}$, for any $x^p \in \mcal C_p$.
For any $p\in [n]$, denote $H_p(Q)$ as the $(n-1)$-dimensional hyperplane with constant $(v^p, c^p)$, $\|v^p\| = 1$ such that
\begin{equation}\label{eq: construction of H_p(Q)}
    \begin{cases}
        \AngleBr{v^p, x} =  c^p + \max_{q\in [n]\setminus p}\Diam(\mcal C_p) ,\quad  \forall x \in Q \setminus x^p. \\
        \AngleBr{v^p, x^p} <  c^p + \max_{q\in [n]\setminus p}\Diam(\mcal C_p).
    \end{cases}
\end{equation}
For all $p \in [n]$, if the following holds
    \begin{equation} \label{eq: distance and diam}
        \mcal D(\mcal C_p, H_p(Q)) > 2n \RoundBr{\max_{q\in [n]\setminus p} \Diam(\mcal C_q)};
    \end{equation}
    then there exists a common point. 
    In particular, the point 
    $x^\star = \frac{1}{n}\sum_{p\in [n]} x^p$
    is a common point. 
\end{theorem}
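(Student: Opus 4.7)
My plan is to show directly that the centroid $x^\star = \frac{1}{n}\sum_{p\in[n]} x^p$ lies in the half-space $\HalfSpace_p(Q) = \CurlyBr{x : \AngleBr{v^p, x} < c^p}$ for every $p$, and then to invoke Lemma~\ref{lem: separating hyperplanes arrangment 2} to conclude that $x^\star$ is a common point. Two ingredients are needed: (i) verifying that each $H_p(Q)$ genuinely separates $\mcal C_p$ from $\bigcup_{q\neq p}\mcal C_q$ (so that Lemma~\ref{lem: separating hyperplanes arrangment 2} actually applies), and (ii) a short arithmetic check that the centroid satisfies every half-space inequality strictly.

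For the separation step, set $\epsilon_p := \max_{q \in [n]\setminus p} \Diam(\mcal C_q)$. On the $\mcal C_p$ side, since $x^p \in \mcal C_p$, the hypothesis \eqref{eq: distance and diam} gives $\lvert\AngleBr{v^p, x^p} - c^p\rvert \geq \mcal D(\mcal C_p, H_p(Q)) > 2n\epsilon_p$; combined with the construction $\AngleBr{v^p, x^p} < c^p + \epsilon_p$ from \eqref{eq: construction of H_p(Q)} and the fact that $2n\epsilon_p > \epsilon_p$, this forces $\AngleBr{v^p, x^p} < c^p - 2n\epsilon_p$. By convexity of $\mcal C_p$, every point of $\mcal C_p$ lies on the same side of $H_p(Q)$, so $\AngleBr{v^p, x} \leq c^p - 2n\epsilon_p$ for all $x \in \mcal C_p$. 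On the other side, for each $q \neq p$ and each $y \in \mcal C_q$, Cauchy--Schwarz gives $\lvert\AngleBr{v^p, y - x^q}\rvert \leq \Diam(\mcal C_q) \leq \epsilon_p$; coupled with $\AngleBr{v^p, x^q} = c^p + \epsilon_p$, this yields $\AngleBr{v^p, y} \geq c^p$. Hence $H_p(Q)$ strictly separates $\mcal C_p$ from $\bigcup_{q\neq p}\mcal C_q$, satisfying the premise of Lemma~\ref{lem: separating hyperplanes arrangment 2}.

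For the centroid, the computation is then immediate:
\begin{equation*}
\AngleBr{v^p, x^\star} = \frac{1}{n}\SquareBr{\AngleBr{v^p, x^p} + \sum_{q\neq p}\AngleBr{v^p, x^q}} \leq \frac{1}{n}\SquareBr{(c^p - 2n\epsilon_p) + (n-1)(c^p + \epsilon_p)} = c^p - \frac{n+1}{n}\epsilon_p < c^p,
\end{equation*}
so $x^\star \in \HalfSpace_p(Q)$ for each $p$. Since this holds simultaneously for all $p \in [n]$, Lemma~\ref{lem: separating hyperplanes arrangment 2} (applied to any choice $y^p \in \mcal C_p$) places $x^\star \in \Conv\CurlyBr{y^p}_{p\in[n]}$, which is exactly the definition of a common point; existence follows as a corollary.

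The main obstacle I anticipate is getting the sign conventions right in the separation step: the construction \eqref{eq: construction of H_p(Q)} places $H_p(Q)$ at level $c^p$, which sits strictly between $\mcal C_p$ and the hyperplane through $Q \setminus x^p$, and one must simultaneously use the distance hypothesis (to push $\mcal C_p$ below level $c^p$ by at least $2n\epsilon_p$) and the diameter bound (to keep every other $\mcal C_q$ within $\epsilon_p$ of level $c^p + \epsilon_p$) in the correct directions. Once the signs are nailed down, the remainder is a single convex-combination estimate.
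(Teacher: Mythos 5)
Your proposal is correct and follows essentially the same route as the paper's proof: you first verify that each $H_p(Q)$ separates $\mcal C_p$ from the other confidence sets using the distance hypothesis \eqref{eq: distance and diam} (the paper's ``Fact 1'' and ``Fact 2''), then show the centroid satisfies $\AngleBr{v^p, x^\star} < c^p$ for every $p$ via the same convex-combination estimate (the paper does this with slack variables $\zeta_{pp}, \zeta_{pq}$), and finally invoke Lemma~\ref{lem: separating hyperplanes arrangment 2}. The only cosmetic difference is that you exploit the exact equality $\AngleBr{v^p, x^q} = c^p + \epsilon_p$ from the construction rather than the triangle-inequality bound $\mcal D(x, H_p(Q)) \leq 2\max_{q'}\Diam(\mcal C_{q'})$, which makes the arithmetic marginally tighter but changes nothing essential.
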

    
Intuitively, Theorem \ref{thm: Distance And Diam} states that if the distance from a confidence set $\mcal C_p$ to the hyperplane $H_p(Q)$ is relatively large compared to the sum of the diameters of all other confidence sets, then the average of any collection of points in the confidence set must be a common point.
As such, Theorem \ref{thm: Distance And Diam} determines the stopping condition for Algorithm \ref{alg:Common Point} and provide us a explicit way to find a common point, which validates the correctness of Algorithm \ref{alg:Common Point}. 
In particular, Algorithm \ref{alg: Stopping Condition} checks if conditions \eqref{eq: distance and diam} are satisfied for the confidence sets in each round. 
If the conditions are satisfied, then Algorithm \ref{alg:Common Point} stops sampling and returns $x^\star$ 
as the common point.

Note that while the diameters of confidence sets can be controlled by the number of samples regarding the marginal vector, $\mathcal{D}(\mcal C_p, H_p(Q))$ is a random variable and needs to be handled with care.
We show that there exist choices of $n$ vertices such that the simplex formed by them has a sufficiently large width, resulting in the stopping condition being satisfied with high probability after $\mathrm{poly}(n,\varsigma^{-1})$ number of samples.

\subsection{Sample Complexity Analysis}
\label{subsec: Sample complexity analysis}
Now, we show that, the conditions of Theorem \ref{thm: Distance And Diam} can be satisfied with high probability.
The distance $\mcal D(\mcal C_p, H_p(Q)),\; p\in [n]$ can be lower bounded by the width of the ground-truth simplex, which is defined as follows:

\begin{definition}[\textbf{Width of simplex}]
    Given $n$ points $\{x^1,...x^n\}$ in $\mbb R^{n}$, let matrix $P = [x^i]_{i\in [n]}$, we define the matrix of coordinates of the points in $P$ w.r.t. $x^i$ as 
$\coM(P,i) := [(x^j-x^i)]_{j\neq i} \in \mbb R^{n\times (n-1)}$.
Denote $\SingularValue_k(M)$ as the $k^{\mrm{th}}$ singular value of matrix $M$ (with descending order).
We define the \textit{width} of the simplex whose coordinate matrix is $P$ as follows
\begin{equation}
    \Width(P):= \min_{i\in [n]}\SingularValue_{n-1}\RoundBr{\coM(P,i)}.
\end{equation}
\end{definition}

\begin{lemma} \label{lem: Distance Min Singular Value}
    Given $n$ points $\{x^1,...,x^n\}$  in $\mbb R^{n}$, let $M$ be the matrix corresponding to these points, assume that $0<M_{ij} <1$ and $\Width(M) \geq \SingularValue$, for some constant $\SingularValue >0$. 
    Let $R \in \mbb R^{n\times n}$ be a perturbation matrix, such that its entries $|R_{ij}| < \epsilon/2, \; \forall (i,j),$ and $0<\epsilon< \SingularValue^2/3n^3$.
    Let $h_{\mrm{min}}$ be a smallest magnitude of the altitude of the simplex corresponding to the matrix $M+R$.
    One has that
    \begin{equation}
        h_{\mrm{min}} \geq \sqrt{\SingularValue^2 - 6n^3\epsilon}.
    \end{equation}   
\end{lemma}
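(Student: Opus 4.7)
The plan is to reduce the altitude bound to a perturbation statement about the smallest singular value of each coordinate matrix $\coM(M+R, i)$, and then invoke Weyl's inequality together with a dual formula for the altitude. For each $i \in [n]$, write $A_i := \coM(M, i) \in \mbb R^{n \times (n-1)}$ and $\tilde A_i := \coM(M+R, i)$. Since the $j$-th column of $\tilde A_i - A_i$ equals $r^j - r^i$ (the difference of two columns of $R$), each of its entries has magnitude strictly less than $\epsilon$, so $\|\tilde A_i - A_i\|_F \leq n\epsilon$; similarly $\|A_i\|_F \leq n$ using $0 < M_{kj} < 1$.

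First, I would express the altitude $h_i$ at vertex $i$ of the perturbed simplex via the dual formula $h_i = \min_{\mathbf{1}_{n-1}^\top \lambda = 1}\|\tilde A_i \lambda\|_2$, obtained by parametrising the affine hull of the opposite face as $\sum_{j\neq i}\lambda_j (M+R)_{\cdot,j}$ with $\sum_j \lambda_j = 1$. Solving this constrained quadratic program yields the Lagrangian identity $h_i^{-2} = \mathbf{1}_{n-1}^\top (\tilde A_i^\top \tilde A_i)^{-1} \mathbf{1}_{n-1}$, so that $h_i$ is controlled by the smallest eigenvalue of the Gram matrix $\tilde A_i^\top \tilde A_i$, i.e.\ by $\SingularValue_{n-1}(\tilde A_i)^2$.

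Next, I would apply Weyl's inequality to the perturbation of the Gram matrix. Expanding
$$
\tilde A_i^\top \tilde A_i - A_i^\top A_i \;=\; A_i^\top(\tilde A_i - A_i) + (\tilde A_i - A_i)^\top A_i + (\tilde A_i - A_i)^\top (\tilde A_i - A_i),
$$
and using the spectral-norm bounds from the first paragraph, one obtains $\|\tilde A_i^\top \tilde A_i - A_i^\top A_i\|_2 \leq 2n^2\epsilon + n^2\epsilon^2 \leq 3n^2\epsilon$, where the quadratic term is absorbed via $\epsilon < \SingularValue^2/(3n^3) \leq 1$. Since $\SingularValue_{n-1}(A_i)^2 \geq \SingularValue^2$ by the width assumption, Weyl's inequality then gives $\SingularValue_{n-1}(\tilde A_i)^2 \geq \SingularValue^2 - 3n^2\epsilon$. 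Combining this with the altitude--singular-value relation from Step 1 and minimising over $i \in [n]$ delivers the claimed bound $h_{\mrm{min}} \geq \sqrt{\SingularValue^2 - 6n^3\epsilon}$, the extra factor of $n$ accommodating the loss incurred when relating the constrained minimum defining $h_i$ to the unconstrained smallest singular value.

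The main obstacle is Step 1: obtaining a tight lower bound on $h_i$ in terms of $\SingularValue_{n-1}(\tilde A_i)$. A naive Cauchy--Schwarz argument on the constraint $\mathbf{1}_{n-1}^\top \lambda = 1$ only yields $h_i \geq \SingularValue_{n-1}(\tilde A_i)/\sqrt{n-1}$, which would cost an additional $\sqrt{n-1}$ factor and fail to give the lemma's clean form. Closing this gap requires exploiting the positivity constraint $0 < M_{kj} < 1$ (together with the implicit $H_N$ structure of the marginal vectors in the intended application), which restricts how the all-ones vector $\mathbf{1}_{n-1}$ can align with the right singular vectors of $\tilde A_i$; this alignment control is ultimately what allows the $\sqrt{n-1}$ loss to be absorbed into the $6n^3\epsilon$ slack in the final bound.
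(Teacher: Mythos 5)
There is a genuine gap, and you have correctly located it yourself: the passage from the constrained minimum $h_i=\min_{\mathbf 1^\top\lambda=1}\|\tilde A_i\lambda\|_2$ to the smallest singular value loses a factor $\sqrt{n-1}$, and the closing paragraph of your proposal does not repair it. A multiplicative loss of $1/\sqrt{n-1}$ cannot be ``absorbed into the $6n^3\epsilon$ slack'': as $\epsilon\to 0$ the lemma asserts $h_{\min}\geq \SingularValue$, while your route only yields $\SingularValue/\sqrt{n-1}$, so no additive deficit that vanishes with $\epsilon$ can bridge the two. Moreover, the positivity $0<M_{ij}<1$ does not control the alignment of $\mathbf 1_{n-1}$ with the singular vectors of $\tilde A_i$ — it is used in the lemma only to bound $\|A_i\|$ when perturbing the Gram matrix — and in general $\mathbf 1^\top G^{-1}\mathbf 1$ can be as large as $(n-1)/\lambda_{n-1}(G)$ (take $G=I$). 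So the ``alignment control'' step is not merely unproved; it is not the right mechanism.

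The missing idea is to replace the eigenvalue bound on $\mathbf 1^\top \hat G^{-1}\mathbf 1$ by the determinant identity. Writing $\hat G$ for the Gram matrix of the perturbed coordinate matrix anchored at some vertex and $\hat G(i)$ for the principal submatrix obtained by deleting the $i$-th row and column, one has (via Cramer's rule, or the volume formula $h_i=\sqrt{\det\hat G/\det\hat G(i)}$) that $h_i^{-2}=(\hat G^{-1})_{ii}=\det\hat G(i)/\det\hat G$, and Cauchy eigenvalue interlacing gives $\det\hat G/\det\hat G(i)\geq\lambda_{n-1}(\hat G)$ \emph{without} any $\sqrt{n-1}$ loss. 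This is exactly how the paper proceeds (it then controls $\det\hat G$ and $\det\hat G(i)$ via a relative determinant perturbation bound of Ipsen--Rehman). Your perturbation step is fine as written — $\|\hat G-G\|_2\leq 3n^2\epsilon$ and Weyl give $\lambda_{n-1}(\hat G)\geq\SingularValue^2-3n^2\epsilon$ — and, once combined with the determinant/interlacing identity instead of the naive Cauchy--Schwarz bound, it actually yields the claimed inequality (indeed with room to spare against the $6n^3\epsilon$ slack). So the proposal is salvageable, but only by importing the one idea it is missing.
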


Lemma \ref{lem: Distance Min Singular Value} guarantees that if the width of the ground truth simplex is relatively large compared to the diameter of the confidence set, then the heights of the estimated simplex are also large.
We now provide an example of a collection of permutation orders corresponding to a set of vertices as follows.
\begin{proposition}\label{prop: polynomial small width of convex hull of adjacent vertices}
    Fix any $\Permutation \in \GroupofPermutation_n$, consider the collection of permutation $\mcal P = \{\Permutation, \Permutation s_1,  \dots, \Permutation s_{n-1}\}$ and matrix $M = [\phi^{\Permutation'}]_{\Permutation' \in \mcal P}$.
    The width of the simplex that corresponds to $M$, is upper bounded as 
        $\Width(M) \geq 0.5 \varsigma n^{-3/2}.$
\end{proposition}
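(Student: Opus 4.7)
The plan is to lower-bound $\SingularValue_{n-1}(\coM(M, i))$ for each of the $n$ possible reference indices $i \in [n]$ and take the minimum. The key algebraic input is that, by $\varsigma$-strict supermodularity, consecutive marginal vectors $\phi^{\sigma}$ and $\phi^{\sigma s_k}$ differ in exactly two coordinates. Writing $a_r = \sigma^{-1}(r)$ for the element at rank $r$ under $\sigma$, the ordering induced by $\sigma s_k$ is obtained from that of $\sigma$ by swapping $a_k$ and $a_{k+1}$, so the only pair of prefixes that differs between the two orderings is at size $k$. Applying the $\varsigma$-strict supermodularity inequality to $(C, C \cup \{a_{k+1}\})$ with added element $a_k$, where $C = \{a_1, \ldots, a_{k-1}\}$, gives $\phi^{\sigma s_k}_{a_k} - \phi^\sigma_{a_k} =: \Delta_k \geq \varsigma$. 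Conservation of the total reward $\mu(N)$ forces the other nonzero coordinate to equal $-\Delta_k$, so $\phi^{\sigma s_k} - \phi^\sigma = \Delta_k (e_{a_k} - e_{a_{k+1}})$.

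For the reference $\sigma$, the coordinate matrix $A := \coM(M, \sigma)$ has columns $w_k := \Delta_k (e_{a_k} - e_{a_{k+1}})$ for $k = 1, \dots, n-1$. A direct inner-product computation shows that the Gram matrix factors as $A^T A = D T D$, where $D = \mrm{diag}(\Delta_1, \dots, \Delta_{n-1})$ and $T \in \mbb R^{(n-1)\times(n-1)}$ is the path Laplacian with diagonal entries $2$ and off-diagonal entries $-1$. The classical spectrum $\lambda_j(T) = 4\sin^2(j\pi/(2n))$ for $j = 1, \dots, n-1$, together with the elementary inequality $\sin(\pi/(2n)) \geq 1/n$ and $\SingularValue_{n-1}(D) \geq \varsigma$, then yields $\SingularValue_{n-1}(A) \geq 2\varsigma/n$.

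For each reference $\sigma s_k$, the next step is to express its coordinate matrix in terms of $A$: I would verify that $\coM(M, \sigma s_k) = A G_k$, up to a harmless column permutation, where $G_k \in \mbb R^{(n-1)\times(n-1)}$ is the identity with its $k$-th row replaced by $-\mbf 1_{n-1}^T$. A short expansion yields the rank-two identity $G_k^T G_k = I - e_k e_k^T + \mbf 1_{n-1} \mbf 1_{n-1}^T$. This matrix acts as the identity on the $(n-3)$-dimensional subspace orthogonal to $\mrm{span}\{e_k, \mbf 1_{n-1}\}$, and on that two-dimensional subspace a direct calculation shows that its representation has trace $n$ and determinant $1$. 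Its two nontrivial eigenvalues are therefore $\lambda_\pm = (n \pm \sqrt{n^2 - 4})/2$, and since $\lambda_- \lambda_+ = 1$ with $\lambda_+ \leq n$, we obtain $\lambda_- \geq 1/n$ and hence $\SingularValue_{n-1}(G_k) \geq 1/\sqrt{n}$.

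Combining via the sub-multiplicative inequality $\SingularValue_{n-1}(A G_k) \geq \SingularValue_{n-1}(A)\,\SingularValue_{n-1}(G_k)$, which follows from $\|A G_k x\| \geq \SingularValue_{n-1}(A)\SingularValue_{n-1}(G_k)\|x\|$ for every $x \in \mbb R^{n-1}$ (using that $A$ has full column rank and $G_k$ is invertible), I obtain $\SingularValue_{n-1}(\coM(M, \sigma s_k)) \geq (2\varsigma/n)(1/\sqrt{n}) = 2\varsigma\, n^{-3/2}$ for each $k \in [n-1]$. The reference $\sigma$ already gives the stronger bound $2\varsigma/n$, so taking the minimum over all $n$ references yields $\Width(M) \geq 2\varsigma\, n^{-3/2} \geq 0.5\,\varsigma\, n^{-3/2}$, establishing the proposition. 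The main technical point to verify carefully is the factorization $\coM(M, \sigma s_k) = A G_k$ and the resulting identity for $G_k^T G_k$; the rest reduces to standard facts about the spectrum of the path Laplacian and sub-multiplicativity of singular values.
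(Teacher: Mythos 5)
Your proof is correct, and it reaches the stated bound (in fact the slightly stronger $2\varsigma n^{-3/2}$) by a genuinely different route than the paper. The paper works directly with the explicit bidiagonal coordinate matrices $\coM(M,1)$ and $\coM(M,2)$ (arguing the other reference vertices are symmetric to the second case) and lower-bounds $\|Vx\|$ and $\|Ux\|$ entrywise via repeated use of the elementary inequality $\sum_i A_i^2 \geq (\sum_i A_i)^2/n$ applied to telescoping partial sums. You instead exploit structure: the edge identity $\phi^{\Permutation s_k}-\phi^{\Permutation} = \Delta_k(e_{a_k}-e_{a_{k+1}})$ with $|\Delta_k|\geq\varsigma$ (which you rederive from strict supermodularity plus conservation of $\mu(N)$; the paper cites this as the generalised-permutahedron property) gives the exact Gram factorization $A^\top A = DTD$ with $T$ the tridiagonal Toeplitz matrix, whose closed-form spectrum $4\sin^2(j\pi/(2n))$ yields $\SingularValue_{n-1}(A)\geq 2\varsigma/n$; the change of reference vertex is then the exact factor $G_k$ with $G_k^\top G_k = I - e_ke_k^\top + \mbf 1\mbf 1^\top$, a rank-two perturbation whose smallest eigenvalue $(n-\sqrt{n^2-4})/2 \geq 1/n$ you compute exactly, and sub-multiplicativity of the smallest singular value finishes. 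I verified the three key identities ($A^\top A=DTD$, $\coM(M,\Permutation s_k)=AG_k$ up to column permutation, and $G_k^\top G_k = I-e_ke_k^\top+\mbf 1\mbf 1^\top$) and the eigenvalue computations; all are correct. Your approach buys sharper, essentially tight constants, a uniform rigorous treatment of all $n-1$ neighbour references (rather than a symmetry assertion), and a reusable decomposition ``reference change $=$ right-multiplication by $G_k$'' that would apply to other vertex collections; the paper's approach is more elementary and self-contained, needing no spectral facts about Toeplitz matrices. One cosmetic point: the matrix you call the path Laplacian is really the Dirichlet (Toeplitz) tridiagonal matrix with constant diagonal $2$ — the graph Laplacian of a path has $1$'s in the corner diagonal entries — but the spectrum you quote is the correct one for the matrix you actually define, so nothing is affected.
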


The vertex set in Proposition \ref{prop: polynomial small width of convex hull of adjacent vertices} comprises one vertex and its $(n-1)$ adjacent vertices. Combining Lemma \ref{lem: Distance Min Singular Value}, Proposition \ref{prop: polynomial small width of convex hull of adjacent vertices} with the stopping condition provided by Theorem \ref{thm: Distance And Diam}, we now can guarantee the sample complexity of our algorithm:

\begin{theorem}\label{thm: Sample complexity of algorithm}
    With the choice of collection of permutation order $\mcal P$ as in Proposition \ref{prop: polynomial small width of convex hull of adjacent vertices}, and suppose that Assumption \ref{assp: strict convexity} holds. 
    Then, for any $\delta \in [0,1]$, if the number of samples is bounded by
    \begin{equation}
        T = O\RoundBr{\frac{n^{15} \log(n\delta^{-1}\varsigma^{-1})}{\varsigma^4}},
    \end{equation}
    the \texttt{Common-Points-Picking} algorithm returns a point in $\ECore$ with probability at least $1-\delta$.
\end{theorem}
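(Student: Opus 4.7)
The plan is to combine three ingredients: (i) a Hoeffding-based good event on which every confidence set contains the corresponding true marginal vector; (ii) the geometric stopping guarantee of Theorem \ref{thm: Distance And Diam}, which certifies that whenever the stopping rule fires the returned barycenter is a common point; and (iii) the quantitative width lower bound of Proposition \ref{prop: polynomial small width of convex hull of adjacent vertices} combined with the perturbation estimate of Lemma \ref{lem: Distance Min Singular Value}, which together pin down when the stopping condition must trigger.

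First I would set up a good event $\mcal E$: for every epoch $\mrm{ep} \geq 1$ and every $p \in [n]$, $\phi^{\Permutation_p} \in \mcal C(\hat\phi^{\Permutation_p}(\mrm{ep}),\delta)$. Applying Hoeffding's inequality coordinate-wise to $\hat\phi^{\Permutation_p}_i(\mrm{ep})$, viewed as a difference of two empirical means of $[0,1]$-bounded rewards, and union-bounding over the $n^2$ coordinates and all epochs, the $\log(n\,\mrm{ep}\,\delta^{-1})$ factor baked into $b_{\mrm{ep}}$ yields $\Pr[\mcal E] \geq 1-\delta$ via a standard $\sum_{\mrm{ep}} \mrm{ep}^{-2}$-style tail summation. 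On $\mcal E$, whenever Algorithm \ref{alg: Stopping Condition} returns \texttt{TRUE} in some epoch, the hyperplanes $H_p(Q)$ constructed in line 7 satisfy the hypotheses of Theorem \ref{thm: Distance And Diam}: indeed $\epsilon_{\mrm{ep}} = 2\sqrt{n}\, b_{\mrm{ep}} \geq \max_q \Diam(\mcal C_q)$, and the check $h_p \geq n\,\epsilon_{\mrm{ep}}$ matches the distance inequality \eqref{eq: distance and diam} up to absolute constants. Therefore $x^\star = n^{-1}\sum_p \hat\phi^{\Permutation_p}$ is a common point, and the chain $\bigcap \Conv \subseteq \Conv(Q) \subseteq \ECore$ places $x^\star$ in the expected core.

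Next I would bound the first epoch at which the stopping condition must trigger on $\mcal E$. Let $M$ be the true marginal matrix of the chosen collection $\mcal P = \{\Permutation,\Permutation s_1,\ldots,\Permutation s_{n-1}\}$; by Proposition \ref{prop: polynomial small width of convex hull of adjacent vertices}, $\Width(M) \geq 0.5\varsigma n^{-3/2}$. On $\mcal E$, the empirical matrix $\hat M(\mrm{ep})$ differs from $M$ coordinatewise by at most $b_{\mrm{ep}}$, so Lemma \ref{lem: Distance Min Singular Value} (with $\SingularValue = 0.5\varsigma n^{-3/2}$ and $\epsilon = 2b_{\mrm{ep}}$) yields the deterministic lower bound $h_{\min} \geq \sqrt{(0.5\varsigma n^{-3/2})^2 - 12 n^3 b_{\mrm{ep}}}$ on the altitudes of the empirical simplex. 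Comparing against the threshold $n\,\epsilon_{\mrm{ep}} = 2n^{3/2}b_{\mrm{ep}}$, the stopping condition must fire once $b_{\mrm{ep}} \lesssim \varsigma^2/n^{6}$. Inverting $b_{\mrm{ep}} = \sqrt{2\log(n\,\mrm{ep}\,\delta^{-1})/\mrm{ep}}$ gives $\mrm{ep} = \tilde O(n^{12}\varsigma^{-4})$ epochs, and multiplying by the $n^2$ queries issued per epoch delivers the advertised $T = O(n^{15}\log(n\delta^{-1}\varsigma^{-1})/\varsigma^4)$ sample complexity, with a small amount of slack absorbing the iterated-log term produced by inverting Hoeffding.

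The main obstacle is the step that converts a width bound on the unknown ground-truth simplex into a deterministic lower bound on the distances $\mcal D(\mcal C_p, H_p(Q))$ evaluated at the random empirical vertices; this is precisely what Lemma \ref{lem: Distance Min Singular Value} provides through a singular-value perturbation argument, and without it one cannot translate the deterministic geometric condition of Theorem \ref{thm: Distance And Diam} into a finite-sample statement. The remainder is bookkeeping: tracking the $\sqrt{n}$ factors that relate the $\ell_\infty$-radius of the confidence box to the $\ell_2$-diameter used in the geometric results, and ensuring the Hoeffding failure probabilities sum to at most $\delta$ over all epochs and permutations.
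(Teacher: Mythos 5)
Your proposal follows essentially the same route as the paper's proof: a Hoeffding good event, correctness via Theorem \ref{thm: Distance And Diam}, and the epoch count obtained by feeding the width bound of Proposition \ref{prop: polynomial small width of convex hull of adjacent vertices} into Lemma \ref{lem: Distance Min Singular Value} and inverting $b_{\mrm{ep}}$. The only quibble is bookkeeping: the $\ell_\infty$-to-$\ell_2$ conversion you flag puts an extra factor of $n$ inside the inverted Hoeffding bound, so the epoch count is $\tilde O(n^{13}\varsigma^{-4})$ rather than $\tilde O(n^{12}\varsigma^{-4})$, which is exactly what yields $T = O(n^{15}\log(\cdot)/\varsigma^{4})$ after multiplying by the $n^2$ queries per epoch.
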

While the choice of vertices in Proposition \ref{prop: polynomial small width of convex hull of adjacent vertices} achieves polynomial sample complexity, the width of the simplex decreases with dimension growth, hindering its sub-optimality. 
An alternative choice of vertices is those corresponding to cyclic permutation, denoted as $\GroupofCyclicPermutation_n \subset \GroupofPermutation_n$, which have a larger width in large subsets of strictly convex games (as observed in simulations) but can be difficult to verify in the worst case.
We refer readers to Appendix \ref{appendix: Alternative choice of $n$ vertices} for the detail simulation and discussion on the choice of set of $n$ vertices.
Based on this observation, we achieve the sample complexity which better dependence on $n$ as follows.


\begin{theorem}\label{thm: Sample complexity of algorithm 2}
    Suppose Assumption \ref{assp: strict convexity} holds. 
    Let $\mcal P = \GroupofPermutation_n$ the collection of cyclic permutations, and denote the coordinate matrix of the corresponding vertices as $W$.
    Assume that the width of the simplex $\Width(W) \geq \frac{n\varsigma}{c_W}$ for some $c_W>0$.
    Then, for any $\delta \in [0,1]$,if number of samples is
    \begin{equation}
        T = O\RoundBr{\frac{n^{5}c_W^4 \log(n c_W \delta^{-1}\varsigma^{-1})}{\varsigma^4}},
    \end{equation}
    the \texttt{Common-Points-Picking} algorithm returns a point in $\ECore$ with probability at least $1-\delta$.
\end{theorem}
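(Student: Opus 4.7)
The proof plan is to apply the same template as the proof of Theorem \ref{thm: Sample complexity of algorithm}, with the improved width estimate $\Width(W) \geq n\varsigma/c_W$ replacing the bound from Proposition \ref{prop: polynomial small width of convex hull of adjacent vertices}. The goal is to show that, under this hypothesis, the stopping condition of Algorithm \ref{alg: Stopping Condition}, which is the quantitative form of the common-point criterion in Theorem \ref{thm: Distance And Diam}, is met within the stated sample budget with probability at least $1-\delta$.

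The first step sets up a high-probability good event. Using Hoeffding's inequality together with an anytime union bound over the $n$ cyclic permutations in $\GroupofCyclicPermutation_n$, the $n$ coordinates of each marginal vector, and all epochs $\mrm{ep} \geq 1$, the choice of $b_{\mrm{ep}}$ in \eqref{eq: confidence bonus} ensures that with probability at least $1-\delta$ one has $\phi^{\Permutation_p} \in \mcal C_p$ for every $p \in [n]$ at every epoch simultaneously. Condition on this event for the remainder of the argument.

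The second step converts the width lower bound into a quantitative lower bound on the altitudes of the empirical simplex. Let $W$ and $\hat W$ denote the true and estimated vertex coordinate matrices. Under the good event, the entrywise perturbation $R = \hat W - W$ is bounded by $b_{\mrm{ep}}$ in absolute value, so Lemma \ref{lem: Distance Min Singular Value} applied with $\SingularValue = n\varsigma/c_W$ and $\epsilon = 2 b_{\mrm{ep}}$ yields
\[
h_{\mrm{min}} \;\geq\; \sqrt{\frac{n^2\varsigma^2}{c_W^2} - 12\, n^3 b_{\mrm{ep}}}
\]
for the smallest altitude of the empirical simplex, whenever $b_{\mrm{ep}}$ is below the feasibility threshold of the lemma.

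The final step verifies the stopping rule. Since each $\mcal C_p$ has Euclidean diameter at most $2\sqrt{n}\, b_{\mrm{ep}}$ and the quantity $h_p$ computed in Algorithm \ref{alg: Stopping Condition} differs from the empirical altitude by an additive term of order $\sqrt{n}\, b_{\mrm{ep}}$, enforcing $h_{\mrm{min}} \gtrsim n^{3/2} b_{\mrm{ep}}$ is sufficient to trigger the common-point criterion of Theorem \ref{thm: Distance And Diam}; the returned average then lies in the common-point set, hence in $\ECore$. Combining this with the altitude display reduces the whole requirement to $b_{\mrm{ep}} \lesssim \varsigma^2 / (n\, c_W^2)$, and inverting $b_{\mrm{ep}} = \sqrt{2\log(n\,\mrm{ep}\,\delta^{-1})/\mrm{ep}}$ together with the $O(n^2)$ queries per epoch yields the stated $T = O(n^5 c_W^4 \log(n c_W \delta^{-1}\varsigma^{-1}) / \varsigma^4)$. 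The main technical obstacle is balancing the two constraints produced by Lemma \ref{lem: Distance Min Singular Value}, namely the feasibility threshold $\epsilon < \SingularValue^2/(3n^3)$ and the separation requirement from Theorem \ref{thm: Distance And Diam}, both driven by the same shrinking $b_{\mrm{ep}}$; it is their relative scaling that ultimately pins down the $\varsigma^{-4}$ dependence. A secondary subtlety is that $\Width(W)$ is a property of the unobserved true vertices whereas the algorithm only manipulates $\hat W$, so the worst-case singular-value interpretation of $\Width$ is essential in transferring the width bound on $W$ to a uniform altitude bound on $\hat W$ regardless of the perturbation direction.
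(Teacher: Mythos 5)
Your proposal follows essentially the same route as the paper, which proves this theorem by rerunning the argument of Theorem \ref{thm: Sample complexity of algorithm} verbatim with the width bound $\Width(W) \geq n\varsigma/c_W$ in place of the one from Proposition \ref{prop: polynomial small width of convex hull of adjacent vertices}; your three steps (good event, Lemma \ref{lem: Distance Min Singular Value} with $\SingularValue = n\varsigma/c_W$, then verifying the stopping rule) are exactly that template. The only wrinkle is a bookkeeping slip near the end: the requirement should be on the Euclidean-diameter quantity $\epsilon_0 \approx 4\sqrt{n}\,b_{\mrm{ep}} \lesssim \varsigma^2/(n c_W^2)$, i.e.\ $b_{\mrm{ep}} \lesssim \varsigma^2/(n^{3/2} c_W^2)$, which is what actually yields $\mrm{ep} = O(n^3 c_W^4 \varsigma^{-4}\log(\cdot))$ and hence the stated $T = O(n^5 c_W^4 \varsigma^{-4}\log(\cdot))$ after multiplying by the $n^2$ queries per epoch.
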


It is worth noting that our algorithm does not require information about the constants of the game $\varsigma, \; c_W$; instead, the number of samples required automatically scales with these constants. This indicates that our algorithm is highly adaptive and requires fewer samples for benign game instances.

\begin{remark}[\textbf{Comment on sample complexity lower bounds}] 
Deriving a lower bound is indeed important,  but comes with several significant challenges. E.g., one possible direction is to extend the game instances in Theorem \ref{Thm: ImposibilityLowDimCore}. However, there are two key technical issues with this idea:
(1) Modifying the face game instance to ensure strict convexity is challenging; (2) It remains unclear how to generalize two face-game instances into $\mathrm{poly}(n)$ game instances such that their cores do not intersect and the statistical distance of the reward can be upper bounded, which is crucial for showing $\mrm{poly}(n)$ dependencies in the lower bound (we refer the reader to appendix \ref{app-subsec: lower bound} for further discussions).
Given the unresolved challenges, deriving a lower bound remains an open question.
\end{remark}

\section{Experiment}
To illustrate the sample complexity of our algorithm in practice and compare it with our theoretical upper bound, we have conducted a simulation as described below.
Code is available at: \url{https://github.com/NamTranKekL/ConstantStrictlyConvexGame.git}.

\textbf{Simulation setting:}
We generate convex game of $n$ players with the expected reward function $f$ defined recursively as follows:
For each $S \subset N$ s.t. $i \notin S$,
\[
f(S \cup \{i\}) = f(S) + |S| + 1 + 0.9\omega.
\]
for some $\omega$ sampled i.i.d. from the uniform distribution $\mrm{Unif}([0,1])$.
We then normalize the value of the reward function within the range $[0,1]$.
It is straightforward to verify that the strict convexity constant is $\varsigma \approx 0.1/n$.
From the simulation results in Figure \ref{fig: sample complexity} (LHS), we can see that the growth pattern nearly matches that of the theoretical bound given in Theorem \ref{thm: Sample complexity of algorithm 2}, indicating that our theoretical bound is highly informative.

Moreover, to demonstrate that our algorithm is robust even when the strict convexity assumption is violated, we ran a simulation where the characteristic function is only convex, i.e., the strict convexity constant is arbitrarily small, as follows:
\[
f(S \cup \{i\}) = f(S) + |S| + 1 + \omega.
\]
We use the cyclic permutations $\GroupofCyclicPermutation_n$ as the input for the algorithm.
In Figure \ref{fig: sample complexity} (RHS), one can see that the number of samples  required as $n$ grows is sub-exponential, indicating that our algorithm is robust when the strict convexity assumption is violated.

\begin{figure}[h!] 
\begin{subfigure}{0.5\textwidth}
    \centering
    \includegraphics[width=0.8\linewidth]{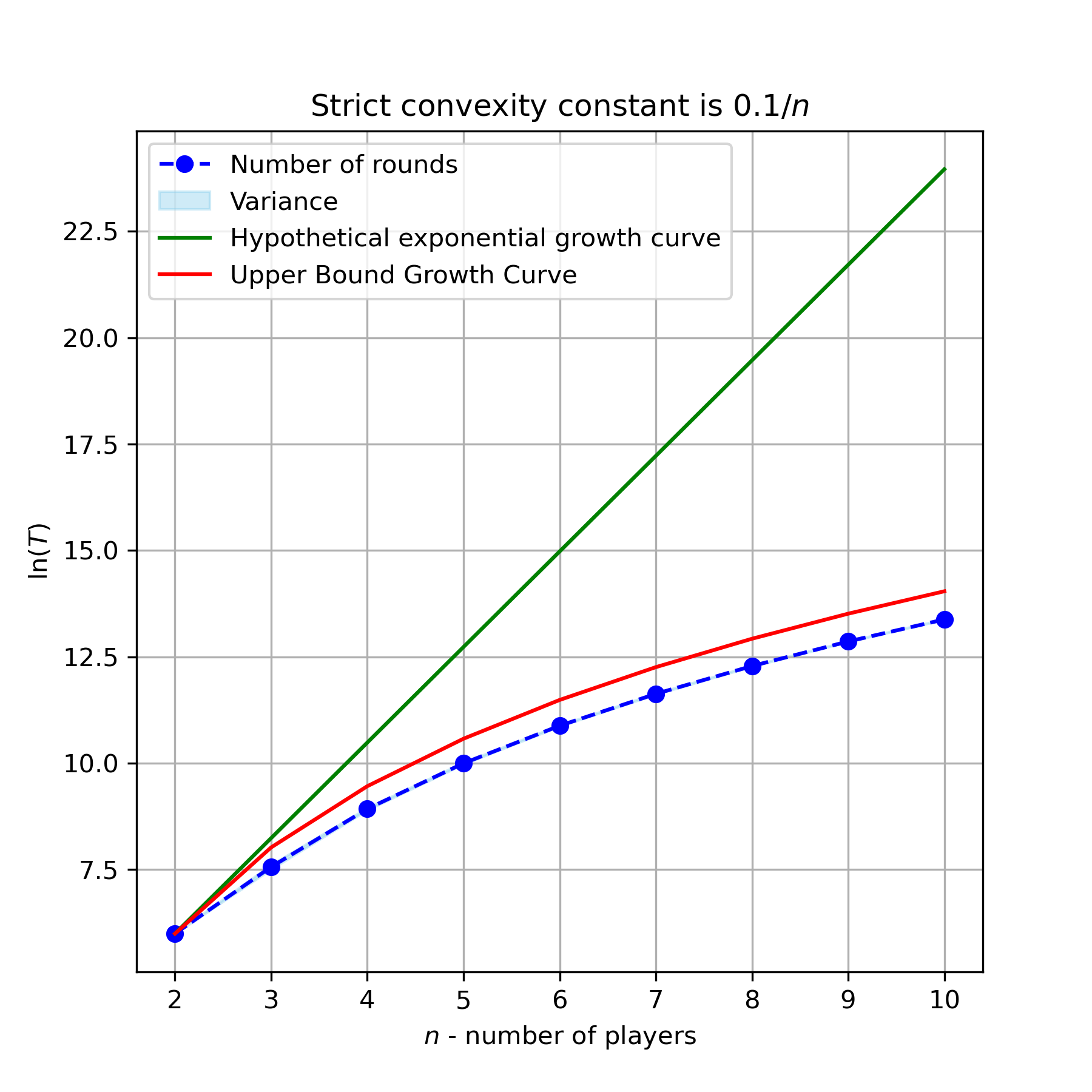}
\end{subfigure}%
\begin{subfigure}{0.5\textwidth}
    \centering
    \includegraphics[width=0.8\linewidth]{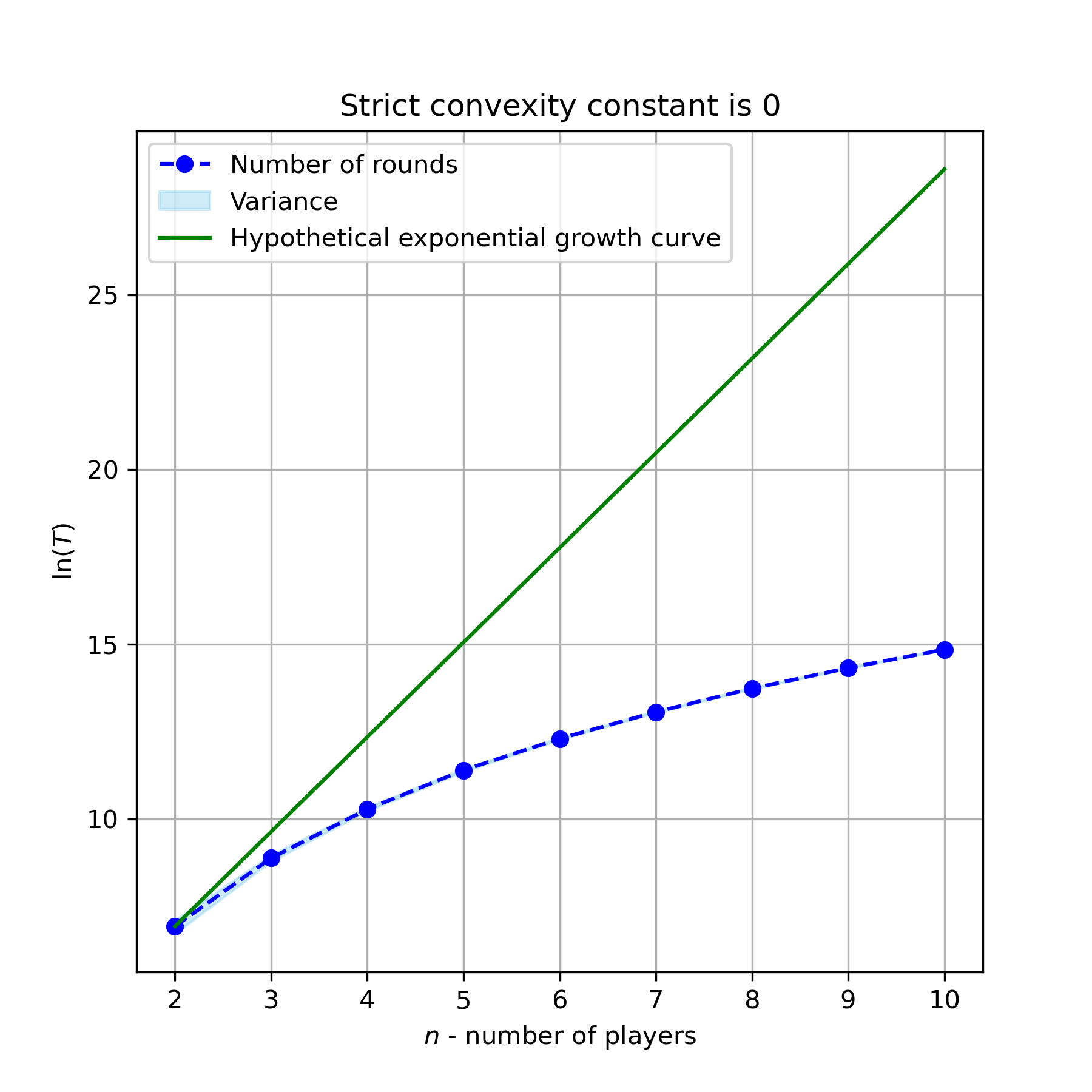}
\end{subfigure}
\caption{Simulation with game of $n \in \{2,...,10\}$ players, where the strict convexity constant $\varsigma$ is $0.1/n$ in the LHS and $0$ in the RHS.}
\label{fig: sample complexity}
\end{figure}

\section{Conclusion and Future Work}

%

In this paper, we address the challenge of learning the expected core of a strictly convex stochastic cooperative game. 
Under the assumptions of strict convexity and a large interior of the core, we introduce an algorithm named \texttt{Common-Points-Picking} to learn the expected core.
Our algorithm guarantees termination after $\mrm{poly}\RoundBr{n, \log(\delta^{-1}),\varsigma^{-1}}$
samples and returns a point in the expected core with probability $(1-\delta)$.
For future work, we will investigate whether the sample complexity of our algorithm can be further improved by incorporating adaptive sampling techniques into the algorithm, along with developing a lower bound for the class of games.


\newpage

\bibliographystyle{plainnat}
\bibliography{Ref}

\newpage
\appendix

\renewcommand*\contentsname{Contents of Appendix}
\addtocontents{toc}{\protect\setcounter{tocdepth}{2}}
\doublespacing
\tableofcontents
\singlespacing

\section{Preliminary and Convex Game}
\label{appendix: convex game}

\subsection{Proof of Theorem \ref{Thm: ImposibilityLowDimCore}}
\label{appendix: subsection Thm ImposibilityLowDimCore}
Here and onwards, we adopt the following notation convention: for real numbers $a,b \in [0,1]$,
$\KL{a}{b}$ represents the KL-divergence $\KL{p}{q}$ where $p,q$ are probability distributions on $\{0,1\}$ such that $p({1})=a,\;  q({1}) = b.$ In other words,
\[
\KL{a}{b} = a \ln \left( \tfrac{a}{b} \right) +
(1-a) \ln \left( \tfrac{1-a}{1-b} \right).
\]
\begin{lemma}[\cite{Kleinberg2019_LischitzBandit}] \label{lem: Bound KL distance of Ber variables}
For any $0 < \varepsilon < y \leq 1$,
$\KL{y-\varepsilon}{y} < \varepsilon^2/y(1-y).$
\end{lemma}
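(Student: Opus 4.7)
The plan is a short, direct calculation relying only on the elementary tangent-line bound $\ln(1+x)\le x$: rewrite both logarithms inside the binary KL expansion in the form $\ln(1+x)$, apply the bound term by term, and simplify. I expect the hardest part of the write-up to be merely checking signs, not producing new ideas.

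First I would expand the definition and rewrite the two log ratios as $(y-\varepsilon)/y = 1 - \varepsilon/y$ and $(1-y+\varepsilon)/(1-y) = 1 + \varepsilon/(1-y)$, giving
$$\KL{y-\varepsilon}{y} \;=\; (y-\varepsilon)\,\ln\!\left(1 - \tfrac{\varepsilon}{y}\right) + (1-y+\varepsilon)\,\ln\!\left(1 + \tfrac{\varepsilon}{1-y}\right).$$
This rewriting is well-defined precisely when $y<1$, which is the only substantive regime: at $y=1$ both sides of the claimed inequality degenerate to $+\infty$.

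Next I would invoke $\ln(1+x)\le x$, valid for every $x>-1$ and strict whenever $x\neq 0$, with the choices $x=-\varepsilon/y\in(-1,0)$ in the first logarithm and $x=\varepsilon/(1-y)>0$ in the second. Because the hypothesis $0<\varepsilon<y<1$ makes both multiplicative weights $(y-\varepsilon)$ and $(1-y+\varepsilon)$ strictly positive, the inequality propagates without any sign flip. Summing the two bounds gives
$$\KL{y-\varepsilon}{y} \;<\; -\tfrac{(y-\varepsilon)\,\varepsilon}{y} + \tfrac{(1-y+\varepsilon)\,\varepsilon}{1-y} \;=\; \Bigl(-\varepsilon+\tfrac{\varepsilon^2}{y}\Bigr)+\Bigl(\varepsilon+\tfrac{\varepsilon^2}{1-y}\Bigr) \;=\; \tfrac{\varepsilon^2}{y}+\tfrac{\varepsilon^2}{1-y} \;=\; \tfrac{\varepsilon^2}{y(1-y)},$$
which is exactly the target bound, with strictness inherited from the fact that $\varepsilon>0$ makes each tangent-line step strict.

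There is essentially no serious obstacle. The only points that warrant care are (i) verifying that the two weights $(y-\varepsilon)$ and $(1-y+\varepsilon)$ are positive, so that the direction of the bound is preserved when multiplying, and (ii) noticing the cancellation of the $\pm\varepsilon$ linear terms, which is precisely what converts the naive weighted sum into $\varepsilon^2/(y(1-y))$ rather than a strictly larger expression. Nothing beyond $\ln(1+x)\le x$ and basic algebra is needed, and no prior result in the paper is invoked.
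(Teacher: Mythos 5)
Your proof is correct. Note that the paper does not prove this lemma at all: it is imported verbatim from the cited reference \cite{Kleinberg2019_LischitzBandit} and used as a black box in the proof of Theorem \ref{Thm: ImposibilityLowDimCore}, so there is no in-paper argument to compare against. Your tangent-line derivation is the standard and essentially shortest route: the rewriting of the two ratios as $1-\varepsilon/y$ and $1+\varepsilon/(1-y)$, the term-by-term application of $\ln(1+x)\le x$ with both weights verified positive, and the cancellation of the linear $\pm\varepsilon$ terms are all carried out correctly, and strictness is correctly inherited from $\varepsilon>0$. Your side remark about the endpoint is also apt: at $y=1$ the right-hand side is $+\infty$ and the KL divergence is likewise $+\infty$ (since the perturbed distribution puts mass $\varepsilon$ where $q$ puts none), so the strict inequality as literally stated is vacuous or false there; this is an imprecision in the lemma statement as quoted, not in your argument, and restricting to $y<1$ as you do is the right reading.
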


Before stating the proof of Theorem \ref{Thm: ImposibilityLowDimCore}, let us introduce some extra notations.
Given a game $G = (N, \mbb P)$, with the expected reward function $\mu$,  we define the following.
\begin{itemize}
    \item $H_C(G) := \{x \in \mbb R^n \mid x(C) = \mu(C) \}$ is the hyperplane corresponding to the effective allocation w.r.t coalition $C$.
    \item $\ECore(G)$ is the expected core of the game $G$.
    \item $F_C(G) := \ECore(G) \cap H_{N\setminus C}(G)$ is facet of the $\ECore$ corresponding to the coalition $C$.
\end{itemize}

We use the following definition of the face games in Theorem \ref{Thm: ImposibilityLowDimCore}, introduced by \cite{Julio_CoreConvexGame}. 
\begin{definition}[Face Game] \label{def: Face game}
    Given a game $G = (N,\mbb P)$ with $\mu(S) = \mbb E_{r \sim \mbb P_S}[r],\; \forall S \subset N$.
    For any $C\subset N$, define a face game $G(C) = (N, \mbb P^C)$ with $\mu_{F_C}(S) = \mbb E_{r\sim \mbb P^C_S}[r]$ such that, for any $S\subset N$,
    \begin{equation}\label{eq: Face Game definition}
        \mu_{F_C}(S) = \mu((S\cap C) \cup (N\setminus C)) - \mu(N\setminus C) + \mu(S\cap (N\setminus C)).
    \end{equation}
\end{definition}
\cite{Julio_CoreConvexGame} showed that the expected core of $G(C)$ is exactly the facet of $\ECore(G)$ corresponding $C$, that is, $\ECore(G(C)) = F_C(G)$.
As noted in \cite{Miguel_FaceGame}, one can decompose the reward function of the face game as follows.
For any $S \subset N$, we have that
\begin{equation} \label{eq: decompose face game}
    \mu_{F_C}(S) = \mu_{F_C}(S\cap C) + \mu_{F_C}(S\cap (N\setminus C)).
\end{equation}


We now proceed the proof of Theorem \ref{Thm: ImposibilityLowDimCore}.
\begin{proof}[\textbf{Proof of Theorem \ref{Thm: ImposibilityLowDimCore} }]
Denote the set convex games with Bernolli reward as $\mbf{GB}$, that is, 
\[\mbf {GB} = \{G = (N,\mbb P) \mid \mbb P = \{\mbb P_S\}_{S\subseteq N};\; \mbb P_S \in \mcal M(\{0,1\}),\; \forall S\subseteq N\}.\]

\paragraph{Face-game instances and the distance between their $\ECore$.}
We first define two games, $G_0$ and $G_1$, with a full-dimensional $\ECore$, such that $G_1$ is a slight perturbation of $G_0.$
Next, we define face games corresponding to $G_0$ and $G_1$ using the perturbed facet. We then show that the distance between the cores of these two face games is at least some positive number $\varepsilon > 0$.

Define a strictly convex game $G_0 := (N,\mbb P) \in \mbf{GB}$, such that $\mu^0(S) := \mbb E_{r\sim \mbb P_S}[r]$, and assume that $\mu^0$ is $\varsigma$-strictly supermodular.
Now, fix a subset $C \subset N$, let define a perturbed game instance $G_1 := (N, \mbb Q) \in \mbf{GB}$, with $\mu^1(S) := \mbb E_{r\sim \mbb Q_S}[r]$ such that
\begin{equation}
    \begin{cases}
        \mu^1(C) := \mu^0(C) -  \varepsilon; \\
        \mu^1(S) := \mu^0(S);      \quad   \quad \forall S\subset N, \; S\neq C; 
    \end{cases}
\end{equation}
for some $0<\varepsilon < \varsigma$.
It is straightforward that $G_1$ is $(\varsigma - \varepsilon)$-strictly convex.
Therefore, $\ECore(G_0)$ and $\ECore(G_1)$ are both full-dimensional.

Fixing a coalition $C \subset N$, we now construct the face games from $G_0,\; G_1$ as in Definition \ref{def: Face game}. \\
Let $G_0(C) := (N,\mbb P^C),\; G_1(C) := (N,\mbb Q^C)  \in \mbf{GB}$, whose expected rewards $\mu_{F_C}^0$ and $\mu_{F_C}^1$ are defined by applying \eqref{eq: Face Game definition} to $\mu^0$ and $\mu^1$ respectively.
Now, we consider the difference between the expected reward function of these two games. 
\begin{equation}
    \begin{cases}
        |\mu_{F_C}^1(S) - \mu_{F_C}^0(S)| = 0 \quad &\forall S \subset N\setminus C \\
        |\mu_{F_C}^1(S) - \mu_{F_C}^0(S)| = \varepsilon \quad &\forall S \subseteq C \\
        |\mu_{F_C}^1(N\setminus C) - \mu_{F_C}^0(N\setminus C)| = \varepsilon.
    \end{cases}
\end{equation}
As one can always decompose the set $S = (S\cap C) \cup (S \cap N\setminus C)$, by the decomposibility of the face game \eqref{eq: decompose face game}, we has that
\begin{equation}
    |\mu_{F_C}^1(S) - \mu_{F_C}^0(S)| \leq \varepsilon, \; \forall S \subset N.
\end{equation}

As the core of face game $G_0(C)$ and $G_1(C)$ lie on the hyperplane corresponding to the coalition $N\setminus C$, and the distance between the hyperplanes of $G_0$ and $G_1$ is $\varepsilon$, which lower bounds the distance between the expected core of $G_0(C)$ and $G_1(C)$.
In particular, as $\ECore(G_0(C)) = F_C(G_0)$ and $\ECore(G_1(C)) = F_C(G_1)$, and $|\mu^1(N\setminus C) - \mu^0(N\setminus C)| =\varepsilon$, which leads to $\mcal D(H_{N\setminus C}(G_0), H_{N\setminus C}(G_1)) = \varepsilon$, we have that
\begin{equation}
    \mcal D\RoundBr{\ECore(G_0(C)), \ECore(G_1(C))} \geq \varepsilon.
\end{equation}

\paragraph{The KL distance and imposibility of learning low-dimensional $\ECore$.}
We show that, with probability $\delta \in (0, 0.2)$, any learner cannot distinguish between $G_0(C)$ and $G_1(C)$ given there are finite number of samples.
We use the information-theoretic framework similar which is well developed within multi-armed bandit literature.

We first upper bound the KL-distance between $\mbb P_S^C, \mbb Q_S^C,\; \forall S \subset N$.
Denote $c_1:= \min_{S\subset N} \RoundBr{\mu_{F_C}^0(S) (1- \mu_{F_C}^0(S))}> 0$, by Lemma \ref{lem: Bound KL distance of Ber variables}, we have that
\begin{equation*}
    \KL{\mbb P^C_S}{\mbb Q^C_S} = \KL{\mu_{F_C}^0(S)}{\mu_{F_C}^1(S)}  \leq \frac{\varepsilon^2}{c_1}, \quad \forall S\subset N.
\end{equation*}

Define the probability space $\Psi = 2^N \times \{0,1\}$.
Fix any algorithm (possibly randomised) $\mcal A$.
At round $t$, denote $(S_t, r_t) \in \Psi$ as the coalition selected by the algorithm and the reward return by the environment.
At round $s<t$, denote $\nu^{t}_0,\;\nu^{t}_1$ as the probability distribution over $\Psi^{t}$ determined by $\mcal A$ and $\mbb P$, $\mbb Q$ accordingly.

We have the following, as stated in the appendix of \cite{Kleinberg2019_LischitzBandit}. For any $u<t$, one has that,
\begin{align*}
    \KL{\nu^{u}_0}{\nu^{u}_1} 
    &= \sum_{\psi^{u-1} \in \Psi^{u-1}} \nu^{u}_0(\psi^u) \log\RoundBr{\frac{\nu^{u}_0(\psi^u \mid \psi^{u-1})}{\nu^{u}_1(\psi^u \mid \psi^{u-1})}} \\
    &= \sum_{\psi^{u-1} \in \Psi^{u-1}} \nu^{u}_0(\psi^u) \log\RoundBr{\frac{\nu^{u}_0(S_u \mid \psi^{u-1})}{\nu^{u}_1(S_u \mid \psi^{u-1})}  
    \cdot \frac{\nu^{u}_0(r_u \mid S_u, \psi^{u-1})}{\nu^{u}_1(r_u \mid S_u, \psi^{u-1})}} \\
    &= \sum_{\psi^{u-1} \in \Psi^{u-1}} \nu^{u}_0(\psi^u) \log\RoundBr{\frac{\nu^{u}_0(r_u \mid S_u, \psi^{u-1})}{\nu^{u}_1(r_u \mid S_u, \psi^{u-1})}} \\
    \intertext{[As the distribution of $S_u$ depends only on $\mcal A$, not on the distribution $\nu_0^{t}, \; \nu_1^{t}$.]} \\
    &=  \sum_{\psi^{u-1} \in \Psi^{u-1}} \sum_{S_u \in 2^N} \sum_{r_u \in \{0, 1\}} 
    \nu^{u}_0(r_u\mid S_u, \psi^{u-1}) \log\RoundBr{\frac{\nu^{u}_0(r_u \mid S_u, \psi^{u-1})}{\nu^{u}_1(r_u \mid S_u, \psi^{u-1})}} \nu^{u}_0(S_u, \psi^{u-1}) \\
    &= \sum_{\psi^{u-1} \in \Psi^{u-1}} \sum_{S_u \in 2^N} \KL{\mu_{F_C}^0(S_u)}{\mu_{F_C}^1(S_u)} \nu^{u}_0(S_u, \psi^{u-1}) \\
    & \leq \frac{\varepsilon^2}{c_1}.
\end{align*}
The last inequality hold because $\KL{\mu_{F_C}^0(S)}{\mu_{F_C}^1(S )} \leq \frac{\varepsilon^2}{c_1},\; \forall S \in 2^N$.

We have that
\begin{equation}
    \KL{\nu^{t}_0}{\nu^{t}_1} = \sum_{u=1}^t \KL{\nu^{u}_0}{\nu^{u}_1}  \leq  \frac{t\varepsilon^2}{c_1}.
\end{equation}
As we can choose $\varepsilon$ to be arbitrarily small, we can choose $\varepsilon$ such that $\KL{\nu^{t}_0}{\nu^{t}_1} \leq 0.1$.  

Now, define the event $\mcal E$ as the event that $\mcal A$ outputs a point in $\ECore(G_0(C))$, assume that $\nu^{t}_0(\mcal E)$ with probability at least $0.8$.
Note that, as $\ECore(G_0(C)) \cap \ECore(G_1(C)) = \varnothing$, $\mcal E$ represents the event where the algorithm fails to output a stable allocation with the game instance $G_1(C)$.
We have that from \cite{Kleinberg2019_LischitzBandit}'s Lemma A.5,
\begin{equation}
    \nu^{t}_1(\mcal E) \geq \nu^{t}_0(\mcal E) \exp\RoundBr{-\frac{\KL{\nu^{t}_0}{\nu^{t}_1} + 1/e}{\nu^{t}_0(\mcal E)}} > 0.8 \exp \RoundBr{-\frac{0.1 + 1/e}{0.8}} > 0.3.
\end{equation}
As it holds for any $t>0$, this means that for any finite number of samples, with probability at least $0.1$, the algorithm will output the incorrect point. 
\end{proof}
\begin{remark} \label{remark: arbitrary interior of the ECore}
    Upon closely examining the face game instances in the proof of Theorem \ref{Thm: ImposibilityLowDimCore}, we can see that even if the $\ECore$ is full-dimensional, but the width of the interior of $\ECore$ is arbitrarily small, it is still not possible to learn a point in $\ECore$ with high probability and finite samples.
    To see this, let us create two perturbed game instances of the face game instances such that their $\ECore$ are full-dimensional but have an arbitrarily narrow interior. 
    The construction can be done by applying modification of equation \eqref{eq: Face Game definition} with an arbitrarily small constant $\zeta > 0$ on the game $G_0,\; G_1$, as follows.
    \begin{equation}
        \mu_{F_C}(S) = \mu((S\cap C) \cup (N\setminus C)) - \mu(N\setminus C) + \mu(S\cap (N\setminus C)) + \zeta.
    \end{equation}
    Now, as long as $\zeta < \varepsilon/2$, meaning that the width of the interior of their $\ECore$ is less than half of the distance between the original face games, the distance between their $\ECore$ remains positive.
    Therefore, as the KL distance between the reward distributions is arbitrarily small but the two games do not share any common stable allocation, no algorithm can output a stable allocation of the ground truth game with high probability and finite samples.
\end{remark}


\subsection{$\ECore$ of convex games and Generalised Permutahedra}
\label{app-subsec: Generalised Permutahedra}
Formulating the coordinates of the vertices of the core can be achieved using the connection between the core of a convex game and the generalised permutahedron.
There is an equivalence between generalised permutahedra and polymatroids; it was also shown in \cite{Milan2016_CoreGeneralizedPolytope} that the core of each convex game is a generalised permutahedron.

For any $\Permutation \in \GroupofPermutation_n$, let $\IdentityPermutation^\Permutation = (\Permutation(1),...,\Permutation(n))$.
The $n$-permutahedron is defined as $\Conv{\{\IdentityPermutation^\Permutation \mid \Permutation \in \GroupofPermutation_n\}}$.
A generalised permutahedron can be defined as a deformation of the permutahedron, that is, a polytope obtained by moving the vertices of the usual permutohedron so that the directions of all edges are preserved \cite{Postnikov2005_GeneralisedPermutahedra}.
Formally, the edge of the core corresponding to adjacent vertices $\phi^\Permutation, \; \phi^{\Permutation s_i} $ can be written as
\begin{equation}\label{eq: def of edges of generalised permutahedron}
    \phi^\Permutation - \phi^{\Permutation s_i} = k_{\Permutation, i} (e_{\Permutation(i)} - e_{\Permutation(i+1)}),
\end{equation}
Where, $ k_{\Permutation, i} \geq 0$, and $e_1,\dots e_n$ are the coordinate vectors in $\mbb R^n$.
If the game is $\varsigma$-strictly convex, $k_{\Permutation, i} > \varsigma$.

\subsection{Proof of Proposition \ref{prop: polynomial small width of convex hull of adjacent vertices}}
\label{appendix: Proof of Lower Bound of the Width}
We utilise the formulation of edges of the generalized permutahedron as described in Subsection \ref{app-subsec: Generalised Permutahedra} to calculate the matrix of coordinates for the vertices of $\ECore$. 
Based on the matrix of coordinates, we now state the proof of Proposition \ref{prop: polynomial small width of convex hull of adjacent vertices}.

\begin{proof}[\textbf{Proof of Proposition \ref{prop: polynomial small width of convex hull of adjacent vertices}}]
    As the set of vertices is $\phi^\omega$ and its $n-1$ neighbors, there are only two cases to consider. First, we need to consider the matrix created by using $\phi_\omega$ as the reference, that is $\text{coM}(M,1)$. As the neighbors have the same roles, bounding the width of the matrices using any neighbor as a reference point can be done identically. Therefore, we will prove the theorem for $\text{coM}(M,2)$, and the proof for $\text{coM}(M,i),\; i\neq 1$ can be done in the same manner.
    Let us denote
    \begin{equation}
    V = \coM(M,1) =  \begin{bmatrix}
        c_1 & 0 & 0 & \cdots & 0 & 0\\
        -c_1 & c_2 & 0 & \cdots & 0  & 0\\
        0 & -c_2 & c_3 & \cdots & 0  & 0\\
        \vdots & \vdots& \vdots& \ddots& \vdots  & \vdots\\
        \vdots & \vdots& \vdots& \ddots& \vdots  & \vdots\\
        0 & 0 & 0&   \cdots &  -c_{n-2} & c_{n-1} \\
        0 & 0 & 0&   \cdots &  0 & -c_{n-1}
    \end{bmatrix}
    \in \mbb R^{n \times (n-1)},
\end{equation}

\begin{equation}
    U =\coM(M,2)= \begin{bmatrix}
        -c_1 & -c_1 & -c_1 & -c_1 & \cdots & -c_1 & -c_1\\
        c_1 & c_1+ c_2 & c_1 & c_1 & \cdots & c_1  & c_1\\
        0 & -c_2 & c_3 & 0 & \cdots & 0  & 0\\
        0 & 0 & -c_3 & c_4 &  \cdots & 0  & 0\\
        \vdots & \vdots& \vdots& \vdots &\ddots& \vdots  & \vdots\\
        \vdots & \vdots& \vdots& \vdots& \ddots& \vdots  & \vdots\\
        0 & 0 & 0& 0&   \cdots &  -c_{n-2} & c_{n-1} \\
        0 & 0 & 0& 0&  \cdots &  0 & -c_{n-1}
    \end{bmatrix}
    \in \mbb R^{n \times (n-1)},
\end{equation}
in which each $c_i>\varsigma$. 

We will exploit the following norm inequality in the proof.
For any $A_1, \dots, A_n \in \mathbb{R}$, we use the following inequality (norm 2 vs. norm 1 of vectors)
\begin{equation}
\label{ineq:norm2-vs-norm1}
  \sum_{i=1}^n A_i^2 \geq \frac{(\sum_{i=1}^n A_i)^2}{n} 
\end{equation}

\paragraph{Consider V.}
Consider a unit vector $x = (x_1,...,x_{n-1})$. We have 
\begin{equation}
    Vx = \begin{bmatrix}
    c_1 x_1 \\
    -c_1 x_1 + c_2 x_2 \\
    -c_2 x_2 + c_3 x_3 \\
    \cdots \\
    -c_{n-2} x_{n-2} + c_{n-1}x_{n-1} \\
    -c_{n-1} x_{n-1}
    \end{bmatrix}
\end{equation}
Applying the Ineq. (\ref{ineq:norm2-vs-norm1}) for $A_1 = c_1 x_1$, $A_2 = -c_1 x_1 + c_2 x_2$, $A_{n-1} = -c_{n-2} x_{n-2} + c_{n-1}x_{n-1}$, $A_n = -c_{n-1} x_{n-1}$ gives
\begin{equation}
    \begin{aligned}
        \|Vx\|^2 &\geq \frac{c_1^2 x_1^2}{n} \geq \frac{\varsigma^2 x_1^2}{n}; \\
        \|Vx\|^2 &\geq c_1^2 x_1^2 + (-c_1 x_1 + c_2 x_2)^2 \geq \frac{c_2^2 x_2^2}{n} \geq \frac{\varsigma^2 x_2^2}{n}; \\
        \dots \\
        \|Vx\|^2 &\geq \frac{ \varsigma^2 x_{n-1}^2}{n}.
    \end{aligned}
\end{equation}
Therefore, 
\begin{equation}
    n\|Vx\|^2 \geq \frac{\varsigma^2(x_1^2 + \dots + x_{n-1}^2)}{n} = \frac{\varsigma^2}{n}
\end{equation}
Therefore $\|Vx\| \geq \varsigma/n$, hence $\SingularValue_{n-1}(V)\geq \varsigma/n $.

\paragraph{Consider U.}
Similarly, consider a unit vector $x = (x_1,...,x_{n-1})$. We have 
\begin{equation}
    Ux = \begin{bmatrix}
    -c_1(x_1 + x_2 + ... + x_{n-1}) \\
    c_1(x_1 + x_2 + ... + x_{n-1}) + c_2 x_2 \\
    -c_2 x_2 + c_3 x_3 \\
    -c_3 x_3 + c_4 x_4 \\
    \cdots \\
    -c_{n-2} x_{n-2} + c_{n-1}x_{n-1} \\
    -c_{n-1} x_{n-1}
    \end{bmatrix}
\end{equation}

Applying the Ineq. (\ref{ineq:norm2-vs-norm1}) for $A_1 = c_1(x_1 + x_2 + ... + x_{n-1})$, $A_2 = c_1(x_1 + x_2 + ... + x_{n-1}) + c_2 x_2$,  $A_3 = -c_2 x_2 + c_3 x_3$, $A_4 = -c_3 x_3 + c_4 x_4$, $\dots$, $A_{n-1} = -c_{n-2} x_{n-2} + c_{n-1}x_{n-1}$, $A_n = -c_{n-1} x_{n-1}$ gives 

Note that
\begin{equation}
    \begin{aligned}
        \|Ux\|^2 &\geq \frac{\varsigma^2(x_1 + x_2 + ... + x_{n-1})^2}{n}; \\
        \|Ux\|^2 &\geq c_1^2(x_1 + x_2 + ... + x_{n-1})^2 + (c_1(x_1 + x_2 + ... + x_{n-1}) + c_2 x_2)^2 \geq \frac{c_2^2 x_2^2}{n} \geq \frac{\varsigma^2 x_2^2}{n}; \\
        \|Ux\|^2 & \geq c_1^2(x_1 + x_2 + ... + x_{n-1})^2 + (c_1(x_1 + x_2 + ... + x_{n-1}) + c_2 x_2)^2 + (-c_2 x_2 + c_3 x_3)^2 \geq \frac{ \varsigma^2 x_3^2}{n};  \\
        \dots \\
        \|Ux\|^2 &\geq \frac{ \varsigma^2 x_{n-1}^2}{n}
    \end{aligned}
\end{equation}
Therefore, we also have
\begin{equation}
    n \|U x\|^2 \geq \frac{\varsigma^2((x_1 + x_2 + ... + x_{n-1})^2 + x_2^2+...+ x_{n-1}^2)}{n} \geq \frac{\varsigma^2 x_1^2}{n^2}
\end{equation}
From that, we have that
\begin{equation}
    2n \|U x\|^2 \geq  \varsigma^2 \frac{x_1^2}{n^2} + \frac{x_2^2}{n} + \dots + \frac{x_{n-1}^2}{n} \geq \frac{x_1^2 +...+x_{n-1}^2}{n^2} = \frac{\varsigma^2}{n^2}, \text{ as } \| x \| = 1
\end{equation}
That is, $\|U x\| \geq \frac{\varsigma^2}{\sqrt{2n^3}}$.
Therefore, 
$\sigma_{n-1}(U) \geq \frac{\varsigma^2}{\sqrt{2n^3}}$.

Therefore, we have that $\Width(M) > \frac{\varsigma^2}{\sqrt{2n^3}}$.
\end{proof}

\subsection{Alternative choice of $n$ vertices of $\ECore$}
\label{appendix: Alternative choice of $n$ vertices}


In this subsection, we provide an alternative choice of vertices rather than that in Proposition \ref{prop: polynomial small width of convex hull of adjacent vertices}.
Recall that, with the choice of vertices in Proposition \ref{prop: polynomial small width of convex hull of adjacent vertices}, the lower bound for the width of the simplex diminishes when the dimension increases.
This leads to a large dependence of the sample complexity on $n$.
To mitigate this, we investigate other choices of $n$ vertices.
To see this, we first recall the equivalence between $\ECore$ and generalized permutahedra as explained in Subsection \ref{app-subsec: Generalised Permutahedra}.

However, even in the case of a simple permutahedron, if the set of vertices is not carefully chosen, the width of their convex can be proportionally small w.r.t. $n$, as demonstrated in the next proposition.
In particular, the same choice of vertices as in \ref{prop: polynomial small width of convex hull of adjacent vertices} results in the simplex with diminishing width as follows.

\begin{proposition}\label{prop: exponentially small width of convex hull of adjacent vertices}
    Consider a permutahedron, fix  $\Permutation \in \GroupofPermutation_n$, consider the matrix $W = [\phi^\Permutation, \IdentityPermutation^{\Permutation s_1},\IdentityPermutation^{\Permutation s_2},\dots, \IdentityPermutation^{\Permutation s_{n-1}}  ]$.
    The width of the simplex that corresponds to $M$, is upper bounded as follows:
    \begin{equation}
        \Width(M) \leq \frac{3}{n}.
    \end{equation}
\end{proposition}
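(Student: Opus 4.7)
The plan is to bound the smallest singular value of $\coM(W, i)$ for a carefully chosen reference vertex $v_i$ and test vector $u \in \mbb R^{n-1}$. Since $\Width(W) \leq \SingularValue_{n-1}(\coM(W, i)) \leq \|\coM(W, i)\, u\| / \|u\|$ holds for any reference index $i$ and any $u$, it will suffice to exhibit one pair $(i, u)$ for which this ratio is at most $3/n$. For concreteness I will present the argument with $\Permutation = \mrm{id}$; the general case is analogous. The vertices then become $v_1 = (1, 2, \ldots, n)^\top$ and $v_j = v_1 + (e_{j-1} - e_j)$ for $j = 2, \ldots, n$.

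For $n \geq 3$, the key step will be to use the off-center reference $v_3$ and assemble a test vector exploiting a telescoping cancellation. The columns of $\coM(W, 3)$ are $v_1 - v_3 = e_3 - e_2$ together with $v_j - v_3 = (e_{j-1} - e_j) - (e_2 - e_3)$ for $j \in \{2\} \cup \{4, \ldots, n\}$. I will assign coefficient $-(n-1)$ to the first column and $+1$ to each of the remaining $n-2$ columns, forming $u$. Using the telescoping identity
\[
\sum_{j \in \{2\} \cup \{4, \ldots, n\}} (e_{j-1} - e_j) = (e_1 - e_n) - (e_2 - e_3),
\]
a short computation will show that $\coM(W, 3)\, u = e_1 - e_n$, which has norm $\sqrt{2}$, while $\|u\|^2 = (n-1)^2 + (n-2) = n^2 - n - 1$. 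Therefore $\SingularValue_{n-1}(\coM(W, 3)) \leq \sqrt{2/(n^2 - n - 1)}$, and the desired comparison $\sqrt{2/(n^2 - n - 1)} \leq 3/n$ reduces to the inequality $7n^2 \geq 9n + 9$, which holds for every $n \geq 2$. The degenerate case $n = 2$ is verified directly from $\Width(W) = \sqrt{2} \leq 3/2$.

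The main subtlety, and the reason one cannot simply use the natural ``central'' reference $v_1$, is that $\coM(W, 1)^\top \coM(W, 1)$ is the classical tridiagonal second-difference matrix with smallest eigenvalue $4\sin^2(\pi/(2n))$, so $\SingularValue_{n-1}(\coM(W, 1)) = 2\sin(\pi/(2n)) \sim \pi/n$. Since $\pi > 3$, this actually exceeds $3/n$ for all $n \geq 4$, so the bound cannot be read off from the central reference. The insight I will exploit is that an off-central reference creates a cancellation in a single algebraic step: each of the $n-2$ ``indirect'' edges carries a copy of $-(e_2 - e_3)$ that can be matched against a large coefficient on the ``direct'' edge $v_1 - v_3 = e_3 - e_2$, so the dominant $(e_3 - e_2)$ component cancels and only the telescoped remainder $e_1 - e_n$ survives. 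Because this residual has norm $\sqrt{2}$ independent of $n$ while $\|u\|$ grows like $n$, the ratio shrinks like $1/n$, which is the claimed rate.
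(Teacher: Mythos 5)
Your proof is correct, and it takes a genuinely different route from the paper's --- one that, as it happens, is actually needed to obtain the stated constant. The paper bounds $\Width(M)$ by $\SingularValue_{n-1}(\coM(M,1))$ alone: it forms the Gram matrix $G=\coM(M,1)^\top\coM(M,1)$, which is the $(n-1)\times(n-1)$ tridiagonal Toeplitz matrix, and reads off its smallest eigenvalue in closed form. However, that eigenvalue is $2-2\cos(\pi/n)=4\sin^2\bigl(\pi/(2n)\bigr)$, not $2\sin^2\bigl(\pi/(2n)\bigr)$ as written, so $\SingularValue_{n-1}(\coM(M,1))=2\sin\bigl(\pi/(2n)\bigr)\sim\pi/n$, which (as you observe) strictly exceeds $3/n$ for every $n\ge 4$; the paper's chain $\le 5/n^2$ and hence $\le 3/n$ therefore does not go through from the central reference vertex. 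Your proof sidesteps this by invoking the definition of $\Width$ as a minimum over reference vertices, passing to the off-center reference $v_3$, and exhibiting an explicit test vector $u$ with $\coM(W,3)u=e_1-e_n$ and $\|u\|^2=n^2-n-1$; the resulting bound $\sqrt{2/(n^2-n-1)}\le 3/n$ reduces to $7n^2\ge 9n+9$, which holds for all $n\ge 2$, and your computations (the telescoping identity, the cancellation of the $(e_2-e_3)$ components, and the separate check at $n=2$) are all correct. What the paper's approach buys is an exact value of one singular value (and hence the sharp asymptotic $\pi/n$ if one were content to weaken the constant); what yours buys is an elementary, purely algebraic certificate that actually delivers the claimed $3/n$, at the cost of only an upper bound on the singular value rather than its exact value. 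The one point worth spelling out in a full write-up is the reduction to $\Permutation=\mathrm{id}$: for general $\Permutation$ the columns of $\coM(W,i)$ are obtained from the identity case by the coordinate relabeling $e_k\mapsto e_{\Permutation(k)}$ (an orthogonal transformation) together with column sign changes, neither of which affects singular values, so the telescoping sum becomes $e_{\Permutation(1)}-e_{\Permutation(n)}$ and the argument is unchanged.
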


\begin{proof}

The coordinate matrix w.r.t. $\phi^\Permutation$, that is, $\coM(M, 1)$ can be written as follows.
\begin{equation}
    V = \begin{bmatrix}
        1 & 0 & 0 & \cdots & 0 & 0\\
        -1 & 1 & 0 & \cdots & 0  & 0\\
        0 & -1 & 1 & \cdots & 0  & 0\\
        \vdots & \vdots& \vdots& \ddots& \vdots  & \vdots\\
        \vdots & \vdots& \vdots& \ddots& \vdots  & \vdots\\
        0 & 0 & 0&   \cdots &  -1 & 1 \\
        0 & 0 & 0&   \cdots &  0 & -1
    \end{bmatrix}
    \in \mbb R^{n \times (n-1)}
\end{equation}
Therefore, the Gram matrix is 
\begin{equation}
    G  := V^\top V = 
    \begin{bmatrix}
        2       & -1    & 0     & 0     &  0    &\cdots  & 0     & 0     & 0\\
        -1      & 2     & -1    &  0    & 0     &\cdots  & 0     & 0     & 0\\
        0       & -1    & 2     & -1    &0      &\cdots  &0      & 0     & 0\\
        \vdots   & \vdots & \vdots & \vdots & \vdots & \ddots & \vdots & \vdots & \vdots\\
        \vdots   & \vdots & \vdots & \vdots & \vdots & \ddots & \vdots & \vdots & \vdots\\
        \vdots   & \vdots & \vdots & \vdots & \vdots & \ddots & \vdots & \vdots & \vdots\\
        \vdots   & \vdots & \vdots & \vdots & \vdots & \ddots & \vdots & \vdots & \vdots\\
        0       & 0     & 0     & \dots &  0    & 0     &   -1  &    2  & -1   \\
        0       & 0     & 0     & \dots &  0    & 0     &   0   &    -1  & 2   \\
    \end{bmatrix}
    \in \mbb R^{(n-1) \times (n-1)}.
\end{equation}
Note that $G$ is a tridiagonal matrix and also Toeplitz matrix, therefore, its minimum eigenvalues has closed form as follows
\begin{equation}
    \EigValue_{n-1}(G) = 2 + 2\cos\RoundBr{\frac{(n-1)\pi}{n}} = 2 \sin^2\RoundBr{\frac{\pi}{2n}} \leq \frac{5}{n^2};
\end{equation}
as $\left|\sin\RoundBr{\frac{\pi}{2n}}\right| \leq \frac{\pi}{2n}$.
Therefore, $\Width(M) \leq  \SingularValue_{n-1}(V) = \sqrt{ \lambda_{n-1}(G) } \leq  \frac{3}{n}$.
\end{proof}

Proposition \ref{prop: exponentially small width of convex hull of adjacent vertices} highlights the challenge of selecting a set of vertices such that the width does not contract with the increasing dimension, even in the case of a simple permutahedron.
Denote $\GroupofCyclicPermutation_n \subset \GroupofPermutation_n$ as the group of cyclic permutations of length $n$.
One potential candidate for such a set of vertices is the collection corresponding to cyclic permutations $\GroupofCyclicPermutation_n$, as described in the next proposition. 
\begin{proposition} \label{pro: min sungular value of cylic permutation matrix}
      Consider the matrix $\overline  W = [ \IdentityPermutation^\Permutation]_{ \Permutation \in \GroupofCyclicPermutation_n  }$.
We have that
     \begin{equation}
        \Width(\overline W) \geq \frac{n}{2}.
     \end{equation}
\end{proposition}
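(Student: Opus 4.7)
The plan is to reduce the width bound on $\overline W$ to a lower bound on the smallest nonzero singular value of a centered matrix $Y$, and then to compute that singular value explicitly by exploiting the circulant structure of $\overline W^\top \overline W$.

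First, I will set up an algebraic reduction. Write $x^k := \IdentityPermutation^{\Permutation_k}$ for the $k$-th vertex (column of $\overline W$), fix $i \in [n]$, and for any $u \in \Euclidean^{n-1}$ indexed by $\{k : k \neq i\}$ define its extension $\tilde u \in \Euclidean^n$ by setting $\tilde u_k = u_k$ for $k \neq i$ and $\tilde u_i := -\sum_{k \neq i} u_k$, so that $\mbf 1_n^\top \tilde u = 0$. A direct computation yields
\begin{equation*}
\coM(\overline W, i)\,u \;=\; \sum_{k \neq i} u_k (x^k - x^i) \;=\; \sum_{k=1}^n \tilde u_k\, x^k \;=\; \overline W \tilde u,
\qquad
\|\tilde u\|^2 \;=\; \|u\|^2 + \tilde u_i^{\,2} \;\geq\; \|u\|^2.
\end{equation*}
Introducing the centered matrix $Y := \overline W - \tfrac{n+1}{2}\mbf 1_n \mbf 1_n^\top$, the orthogonality $\mbf 1_n^\top \tilde u = 0$ gives $\overline W \tilde u = Y \tilde u$, and hence $\|\coM(\overline W, i)\,u\| = \|Y \tilde u\|$.

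Next, I will compute (or lower bound) the nonzero singular values of $Y$. Since every row and every column of $\overline W$ is a permutation of $\{1,\dots,n\}$, we have $\overline W \mbf 1_n = \overline W^\top \mbf 1_n = \tfrac{n(n+1)}{2}\mbf 1_n$; hence $Y \mbf 1_n = \mbf 1_n^\top Y = 0$, and the nonzero singular values of $Y$ are precisely the singular values of $\overline W$ other than the one along the $\mbf 1_n$ direction. The matrix $\overline W$ itself is not a standard circulant: its entries $\overline W_{ij} = ((i+j-2) \bmod n)+1$ depend on $i+j$ rather than on $i-j$. Nonetheless, the Gram matrix $\overline W^\top \overline W$ \emph{is} circulant, because $(\overline W^\top \overline W)_{ij}$ depends only on $(i-j) \bmod n$. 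Diagonalising it with the DFT and using the telescoping identity $\sum_{k=0}^{n-1}(k+1)\omega^{jk} = -n/(1-\omega^j)$ for $j \neq 0$ (with $\omega$ a primitive $n$-th root of unity), the singular values of $\overline W$ come out to $\sigma_0 = n(n+1)/2$ together with $\sigma_j = n/(2\sin(\pi j / n))$ for $j = 1,\dots,n-1$. Since $\sin(\pi j / n) \leq 1$, this yields $\|Y v\| \geq (n/2)\|v\|$ for every $v \perp \mbf 1_n$.

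Finally, combining the two steps gives
\begin{equation*}
\|\coM(\overline W, i)\,u\| \;=\; \|Y \tilde u\| \;\geq\; \tfrac{n}{2}\|\tilde u\| \;\geq\; \tfrac{n}{2}\|u\|,
\end{equation*}
so $\SingularValue_{n-1}(\coM(\overline W, i)) \geq n/2$ for every $i \in [n]$, whence $\Width(\overline W) \geq n/2$, as claimed. The main delicacy I anticipate is the singular-value computation: because $\overline W$ is symmetric but not itself a standard circulant, the DFT must be applied to $\overline W^\top \overline W$ rather than $\overline W$, and the closed form for $\sum_k (k+1)\omega^{jk}$ has to be carried out with care. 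The algebraic reduction in the first step and the final combination are routine identities.
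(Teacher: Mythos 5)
Your proof is correct, but it takes a genuinely different route from the paper's. The paper works directly with the coordinate matrix $V = \coM(\overline W, 1)$: writing $z = Vu$, it observes that consecutive rows of each column of $V$ differ by $n$ at exactly one position and by $0$ elsewhere, so $z_i - z_{i+1} = n u_i$, and then the elementary inequality $4\|z\|^2 \geq \sum_{i}(z_i - z_{i+1})^2 = n^2\|u\|^2$ gives $\SingularValue_{n-1}(V) \geq n/2$; a column-permutation argument transfers the bound to $\coM(\overline W, i)$ for every $i$. You instead reduce $\coM(\overline W, i)u$ to $\overline W \tilde u$ with $\tilde u \perp \mbf 1_n$ and compute the full spectrum by Fourier diagonalisation, obtaining the exact singular values $n(n+1)/2$ (along $\mbf 1_n$) and $n/(2\sin(\pi j/n)) \geq n/2$ for $j = 1,\dots,n-1$. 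Your route uses heavier machinery but is strictly more informative: it shows the constant $n/2$ is essentially tight (attained at $j = n/2$ for even $n$), whereas the paper's inequality only certifies the lower bound. Your reduction $\|\coM(\overline W,i)u\| = \|\overline W \tilde u\| \geq \tfrac{n}{2}\|\tilde u\| \geq \tfrac{n}{2}\|u\|$ is valid because $\mbf 1_n$ is an eigenvector of the symmetric matrix $\overline W^\top \overline W$, so its orthogonal complement is an invariant subspace on which the quadratic form is bounded below by $(n/2)^2$. One minor inaccuracy: with the paper's column ordering one has $\overline W_{ij} = ((i-j)\bmod n)+1$, so $\overline W$ is itself a (non-symmetric) circulant rather than the symmetric Hankel-type matrix you describe; this is purely a convention on how the cyclic permutations are listed as columns and does not affect your argument, since the Gram matrix is circulant and the singular values are invariant under column permutation in either case.
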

\begin{proof}
    The form of matrix $\overline W$ is as follows 
    \begin{equation}
        \overline W = \begin{bmatrix}
        1           & n         & n-1       & \dots     & 2         \\
        2           & 1         & n         & \dots     & 3         \\
        3           & 2         & 1         & \dots     & 4         \\
        \vdots      & \vdots    & \vdots    & \vdots    & \vdots    \\
        n-1         & n-2       & n-3       & \dots     & n-1         \\
        n           & n-1       & n-2       & \dots     & 1         \\
        \end{bmatrix}.
    \end{equation}
    
    The coordinate matrix w.r.t. the first column is as follows
    \begin{equation}
            V = \coM(\overline W, 1) =  \begin{bmatrix}
                n-1 & n-2 & \ldots  & 1 \\
                -1 & n-2 & \ldots  & 1 \\
                -1 & -2 & \ldots  & 1 \\
                \vdots & \vdots & \ddots  & \vdots \\
                -1 & -2 & \ldots  & 1 \\
                -1 & -2 & \ldots  & -(n-1) \\
        \end{bmatrix}.
\end{equation}
    Let $u \in \mbb R^{n-1}$ be any unit vector, and let $z =  V u \in \mbb R^n$.
    We have that
    \begin{equation}
            z_i - z_{i+1} = n u_i.
    \end{equation}
    Let us consider
    \begin{equation}
        \begin{aligned}
            4\Norm{z}^2 &= 4 z_1^2 + 4 z_2^2 + \dots + 4 z_n^2 \\
            & = 2z_1^2 + 
            [(z_1+z_2)^2 + (z_1-z_2)^2 ]
            + [(z_2+z_3)^2 + (z_2-z_3)^2 ] \\
            &\quad+ \dots + [(z_{n-1}+z_n)^2 + (z_{n-1}-z_n)^2 ]
            + 2 z_n^2\\
            & \geq (z_1 - z_2)^2 + (z_2 - z_3 )^2 + \dots + (z_{n-1} - z_{n})^2 \\
            &= n^2 (u_1^2 + u_2^2 + \dots + u_{n-1}^2) = n^2.
        \end{aligned}
    \end{equation}
    Therefore, we have that
    \begin{equation}
        \SingularValue_{n-1}(V) = \min_{u: \|u\|=1} \sqrt{\frac{\Norm{V u}^2}{\Norm{u}^2}} \geq \frac{n}{2}.
    \end{equation}

    It is straightforward that if one takes any column of $\overline W$ as a reference column, the resulting coordinate matrices have identical singular values. 
    In particular, for any $i,j \in [n]$
    \[\coM(\overline W, i) = P\cdot  \coM(\overline W, j), \]
    where $P$ is a permutation matrix, thus, their singular values are identical.
    Therefore, we have that
    \[
    \Width(\overline W ) \geq \frac{n}{2}.
    \]
\end{proof}

As a result, the set of vertices corresponding to cyclic permutations is a sensible choice.
In case of a generalised permutahedron, let us define  
\begin{equation}\label{eq: coordinates matrix for cyclic perms}
    W  := [\phi^{\Permutation}]_{\Permutation \in \GroupofCyclicPermutation_n}.
\end{equation}

As generalised permutahedra are deformations of the permutahedron, we expect that $\Width(W)$ is reasonably large for a broad class of strictly convex games.
In particular, we consider the class of strictly convex games in which the width $\Width(W)$ is lower bounded, as in the following assumption:
\begin{assumption} \label{assp: width of cyclic simplex}
    The width of the simplex that corresponds to $W$ in \eqref{eq: coordinates matrix for cyclic perms} is bounded as follows
    \begin{equation}
        \Width(W) \geq \frac{ n \varsigma }{c_W},
    \end{equation}
    for some constant $c_W > 0$.
\end{assumption}
These parameters will eventually play a crucial role in determining the  number of samples required using this choice of $n$ permutation orders.
Although proving an exact upper bound for $c_W$ in all strictly convex games is challenging, we conjecture that $c_W$ is relatively small in a large subset of the games. 

To investigate Assumption \ref{assp: width of cyclic simplex}, we conducted a simulation to compute the constant $c_W$ of the minimum singular value $\SingularValue_{n-1}(M)$.
For each case where $n$ takes values of $(10,\; 50,\; 100,\; 150,\; 200,\; 300,\; 500,\; 1000)$, the simulation consisted of $20000$ game trials with $\varsigma = 0.1/n$. 
As depicted in Figure \ref{fig: c_W}, the values of $c_W$ tend to be relatively small and highly concentrated within the interval $(0, 30)$.
This observation suggests that for most cases of strictly convex games, $c_W$ remains reasonably small. 
Consequently, our algorithm exhibits relatively low sample complexity.
Code of the experiment is available at: \url{https://github.com/NamTranKekL/ConstantStrictlyConvexGame.git}.


\begin{figure}[t!]
    \centering
    \includegraphics[width = 0.7\textwidth]{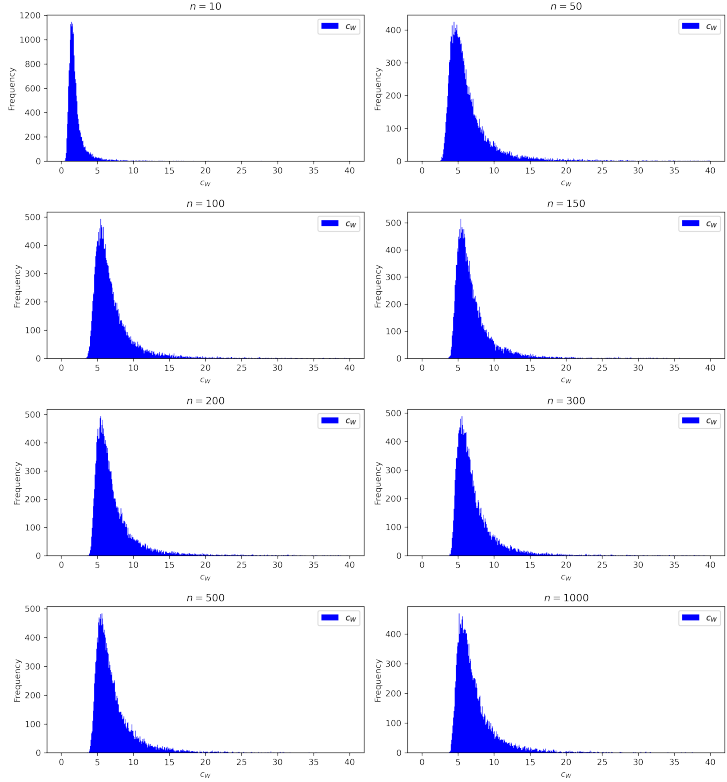}
    \caption{$c_W$ with $n\in\{10,\; 50,\; 100,\; 150,\; 200,\; 300,\; 500,\; 1000\}$, $\varsigma = \frac{0.1}{n}$, and $20000$ trials}
    \label{fig: c_W}
\end{figure}

For each case where $n$ takes values of $(10,\; 50,\; 100,\; 150,\; 200,\; 300,\; 500,\; 1000)$, the simulation consisted of $20000$ game trials with $\varsigma = 0.1/n$. 
As depicted in Figure \ref{fig: c_W}, the values of $c_W$ tend to be relatively small and highly concentrated within the interval $(0, 30)$.
This observation suggests that for most cases of strictly convex games, $c_W$ remains reasonably small. 
The results indicate that $c_W$ tends to be relatively small with high probability, and does not depend on the value of $n$.

\section{On the Stopping Condition}
\label{appendix: Algorithm and Stopping Condition}
\begin{proof}[\textbf{Proof of Proposition \ref{prop: the minimum number of points needed}}]
    For each $\mcal C_p$, choose a point in its interior, denote as $x^p$.
    As there are at most $n-1$ points $\{x^p\}_{p\in [|\mcal P|]}$, there exists a $(n-2)$-dimensional hyperplane $H$ that contains $\{x^p\}_{p\in [|\mcal P|]}$.
    Let $\Tilde H$ be a hyperplane parallel to $H$ and let the distance $\mcal D(H, \Tilde H)$ be arbitrary small. 
    
    As confidence sets are full-dimensional ($n-1$), $\Tilde H$ must also intersect with the interiors of all confidence sets.
    Since $H$ and $\Tilde H$ are parallel, any convex hull of points within $H$ and $\Tilde H$ cannot intersect. 
    Therefore, there is no common point.
\end{proof}

\begin{proof}[\textbf{Proof of Proposition \ref{prop: Imposibility Colinear CS}}]
    The proof spirit is similar to that of Proposition \ref{prop: the minimum number of points needed}.
    
    Let $H$ be the $(n-2)$-dimensional hyperplane that intersects with the interiors of all confidence sets.
    Let $\Tilde H$ be a hyperplane parallel to $H$ and let the distance $\mcal D(H, \Tilde H)$ be arbitrary small. 

    As confidence sets are full-dimensional, $\Tilde H$ must also intersect with the interiors of all confidence sets.
    Since $H$ and $\Tilde H$ are parallel, any convex hull of points within $H$ and $\Tilde H$ cannot intersect. 
    Therefore, there is no common point.
\end{proof}

The proof of Theorem \ref{theorem: existence of serparating hyperplane 2} is a combination of the classic hyperplane separation theorem and the following lemma. 

\begin{lemma}\label{lem: existence of serparating hyperplane}
   Let $\{\mcal C_p\}_{p\in [n]}$ be mutually disjoint compact and convex subsets in $\Euclidean^{n-1}$. 
   Suppose there does not exist a $(n-2)$-dimensional hyperplane that intersects with all confidence sets $\mcal C_p, \; \forall p\in [n]$, then for each $p\in [n]$ 
    \begin{equation}
        \mcal C_p \cap \Conv{\underset{q\neq p}{\bigcup}\, \mcal C_q} = \varnothing.
    \end{equation} 
\end{lemma}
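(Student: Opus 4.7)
The plan is to prove the contrapositive: assume that for some $p \in [n]$ there exists a point $x \in \mcal C_p \cap \Conv{\bigcup_{q\neq p} \mcal C_q}$, and construct an $(n-2)$-dimensional hyperplane in $\Euclidean^{n-1}$ that meets every $\mcal C_r$, $r\in [n]$. This will contradict the stated hypothesis that no such hyperplane exists.

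The crucial preliminary step is to show that $x$ admits a convex combination representation using at most one point from each $\mcal C_q$. Namely, since each $\mcal C_q$ is convex, the identity
\[
\Conv{\bigcup_{q\neq p} \mcal C_q} \;=\; \Bigl\{ \sum_{q\neq p} \lambda_q a_q \;:\; a_q \in \mcal C_q,\; \lambda_q \geq 0,\; \sum_{q\neq p}\lambda_q = 1 \Bigr\}
\]
holds. The ``$\supseteq$'' inclusion is immediate; for ``$\subseteq$'', start from any Carathéodory-style convex combination of points drawn from $\bigcup_{q\neq p}\mcal C_q$, group the summands by the set each point belongs to, and collapse each group into a single representative $a_q \in \mcal C_q$ using the convexity of $\mcal C_q$. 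This lets me write $x = \sum_{q\neq p} \lambda_q a_q$ with $a_q \in \mcal C_q$ and $\sum \lambda_q = 1$.

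From here a simple dimension count finishes the argument. For every index $q$ with $\lambda_q=0$ I still pick an arbitrary $a_q \in \mcal C_q$ (nonempty by compactness). The collection $\{a_q\}_{q\neq p}$ consists of $n-1$ points in $\Euclidean^{n-1}$, so its affine hull has dimension at most $n-2$. Since $x$ is a convex combination of these points, $x$ also lies in this affine hull. I then take any $(n-2)$-dimensional hyperplane $H$ containing this affine hull; by construction $H$ passes through $x \in \mcal C_p$ and through $a_q \in \mcal C_q$ for every $q \neq p$, so $H$ meets all $n$ sets, contradicting the hypothesis.

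The main conceptual obstacle is that a naive appeal to Carathéodory's theorem in $\Euclidean^{n-1}$ only guarantees $x = \sum_{i=1}^n \lambda_i y_i$ with $y_i \in \bigcup_{q\neq p}\mcal C_q$ and the $y_i$'s affinely independent; by pigeonhole these $n$ points might fail to include a representative from every $\mcal C_q$, and their affine hull may be all of $\Euclidean^{n-1}$ rather than an $(n-2)$-hyperplane. Exploiting the \emph{per-set} convexity of each $\mcal C_q$ to force exactly one representative per set, and thereby reducing the point count from $n$ to $n-1$, is the key move that makes the dimension count yield a proper hyperplane; the remainder of the proof is routine affine geometry.
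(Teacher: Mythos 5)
Your proposal is correct and follows essentially the same route as the paper's own proof: take the contrapositive, use Carathéodory (or any finite convex-combination representation) and the per-set convexity to collapse the representation of $x$ to at most one representative $a_q \in \mcal C_q$ per set, then observe that the resulting $n-1$ points together with $x$ lie in a common $(n-2)$-dimensional hyperplane meeting all $n$ sets. The paper handles the case of fewer than $n-1$ represented sets by the same device you use (padding with arbitrary points from the unrepresented sets), so there is nothing substantive to add.
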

\begin{proof}
    We prove this lemma by contra-position, that is, if there is $\mcal C_p$ such that
   \begin{equation*}
       \mcal C_p \cap \Conv{\underset{p\neq q}{\bigcup} \mcal C_q} \neq \varnothing;
   \end{equation*}
   then there exist a hyperplane that intersects with all the $\mcal C_p, \;\forall p\in [n]$.
   
    \textbf{First}, assume there is a point $x = \mcal C_p \cap \Conv{\underset{q\neq  p}{\bigcup} \mcal C_p} $. 
    By Carathéodory's theorem, there are at most $n$ points $x^k \in \underset{q\neq  p}{\bigcup} \mcal C_q$ such that
    \begin{equation} \label{eq: convex-exp -  lem: existence of serparating hyperplane}
        x = \sum_{k \in [n]} \alpha_k x^k. 
    \end{equation}
    As each $x^k \in \mcal C_q$ for some $\mcal C_q$, one can rewrite the equation above as
    \begin{equation}
        x = \sum_{q\neq p} \; \sum_{k:\;x^k \in \mcal C_q} \alpha_k x^k. 
    \end{equation}
    Furthermore, we can write 
    \begin{equation} \label{eq: convex-exp rewrite - lem: existence of serparating hyperplane}
        \begin{aligned}
            \sum_{x^k \in \mcal C_q} \alpha_k x^k = \Tilde \alpha_q \Tilde x^q, \quad
            \text{in which,}\quad \Tilde x^q := \frac{\sum_{k:\; x^k \in \mcal C_q} \alpha_k x^k}{\sum_{k:\; x^k \in \mcal C_q} \alpha_k }, \quad
            \text{and}\quad \Tilde \alpha_q :=\sum_{k:\; x^k \in \mcal C_q} \alpha_k.
        \end{aligned}            
    \end{equation}
    Since $\mcal C_q$ is convex, $\Tilde x^q \in \mcal C_q$.
    Substituting (\ref{eq: convex-exp rewrite - lem: existence of serparating hyperplane}) into (\ref{eq: convex-exp -  lem: existence of serparating hyperplane}), one obtains
    \begin{equation}
        x =  \sum_{q \neq p} \Tilde \alpha_q \Tilde x^q.
    \end{equation}
    Define $H$ as a hyperplane that passes through all $\Tilde{x}_q$, we have that $x \in H$. 
    
    \textbf{Second}, we now show how to construct a hyperplane that intersects with all $\mcal C_m,\; m\in [n]$.
    Let $I$ be the set of indices such that $\mcal C_q \ni \Tilde{x}_q$. 
    We have two following cases.
    \begin{itemize}
        \item[(i)] First, if $|I| = n-1$, then $H$ is the $(n-2)$-dimensional hyperplane that intersect with all $\mcal C_m, \; m\in [n]$.
        \item[(ii)] Second, if $|I| < n-1$, for any $\mcal C_{q'}\neq \mcal C_p$ that does not contain any $\tilde x^q$, we choose any arbitrary point $x^{q'} \in \mcal C_{q'}$. 
    As there are $n-1$ points of $\Tilde x^q$ and $x^{q'}$, there exists a hyperplane $\overline H$ that contains all these points.
    Furthermore, $\overline H$ must contain $x$, so it is the $(n-2)$-dimensional hyperplane that intersects with all sets $\mcal C_m,\; \forall m \in [n]$. 
    \end{itemize}
\end{proof}

Now, we state the proof of Theorem \ref{theorem: existence of serparating hyperplane 2}.
\begin{proof}[\textbf{Proof of Theorem \ref{theorem: existence of serparating hyperplane 2}}]
    As a result of Lemma \ref{lem: existence of serparating hyperplane}, we have that for all $\mcal C_p, \forall p\in [n]$, 
    \begin{equation}
        \mcal C_p \cap \Conv{\underset{q\neq p}{\bigcup} \mcal C_q} = \varnothing.
    \end{equation}
    Therefore, by the hyperplane separation theorem, there must exist a hyperplane that separates $\mcal C_p$ and $\Conv{\underset{q\neq p}{\bigcup} \mcal C_q}$.
\end{proof}

\begin{proof}[\textbf{Proof of Lemma \ref{lem: separating hyperplanes arrangment 2}}]
    Let us denote $\Delta_n$ as $\Conv{\CurlyBr{x^p}_{p\in [n]}}$.
    As there is no hyperplane of dimension $n-2$ go through all the set $\mcal C_p$, the simplex $\Delta_n$ is $(n-1)$ dimensional. 
    We have that 
    \begin{equation*}
        \underset{p\in [n]}{\bigcap} E_p \subseteq \Delta_n \iff \Delta_n^c \subseteq \underset{p\in [n]}{\bigcup} E_p^c;
    \end{equation*}
    where $E^c_p$ is the complement of the set $E_p$.

    We will prove the RHS of the above.
    Consider $\hat x \in \Delta_n^c$, as $\Delta_n$ is full dimensional, $\hat x$ can be uniquely written as affine combination of the vertices, that is,
    \begin{equation*}
        \hat x = \sum_{p\in [n]} \lambda_p x^p,\quad  \sum_{p\in [n]} \lambda_p  = 1 .
    \end{equation*}
    As $\hat x \in \Delta_n^c$, there must exist some $\lambda_k <  0$.
    
    Now, we shall prove $\hat x \in E_k^c$.
    Consider the following, 
    \begin{equation}
    \begin{aligned}
        \AngleBr{v^k, \hat x} = \AngleBr{v^k, \sum_{p\in [n]} \lambda_p x^p}  &= \lambda_k  \AngleBr{v^k, x^k} + \sum_{p\neq k} \lambda_p  \AngleBr{v^k, x^p} \\
        &> \lambda_k c^k +  c^k \sum_{p\neq k} \lambda_p \\
        &= c^k
    \end{aligned}
    \end{equation}
    The above inequality holds since $\AngleBr{v^k, x^k} < c_k$ and $\lambda_k <0$. 
    Therefore, $\hat x \in E_k^c$.
    This means that
    \begin{equation}
        \Delta_n^c \subseteq \underset{k\in [n]}{\bigcup} E_k^c.
    \end{equation}
\end{proof}

\begin{proof}[\textbf{Proof of Theorem \ref{thm: Distance And Diam}}]
Before proceeding the main proof, we show two simple consequences of the construction of $H_p(Q)$, $p\in [n]$ and the assumption \eqref{eq: distance and diam}.

Fact 1:  \textit{Consider $p\in [n]$, $H_p(Q)$ acts as a separating hyperplane for $\mcal C_p$}.
To see this, assume that $H_p(Q)$ is not a separate hyperplane for $\mcal C_p$, then there exists $z^p \in \mcal C_p$ such that $\AngleBr{v^p, z^p} \geq c^p$.
From \eqref{eq: construction of H_p(Q)}, we have $\AngleBr{v^p, x^p} \leq c^p + \max_{q\in[ n]\setminus p} \Diam(\mcal C_p)$.
Then, there are two cases.
First, assume that $\AngleBr{v^p, x^p} \leq c^p$. As $x^p,\; z^p \in \mcal C_p$ and $\AngleBr{v^p, z^p} \geq c^p$, there must exist a point $x$ in the line segment $[x^p, z^p]$ such that $\AngleBr{v^p, x} = c^p$. 
This means that $\mcal D(\mcal C_p, H_p) = 0$, which violates assumption \eqref{eq: distance and diam}.
Second, assume that $c^p \leq \AngleBr{v^p, x^p} \leq c^p + \max_{q\in[ n]\setminus p} \Diam(\mcal C_p)$. 
Then, we have that
\begin{equation*}
    \mcal D(\mcal C_p, H_p) \leq \mcal D(x^p, H_p) = |\AngleBr{v^p, x^p} - c^p| \leq  \max_{q\in[ n]\setminus p} \Diam(\mcal C_q).
\end{equation*}
This also violates assumption \eqref{eq: distance and diam}.
This implies that if \eqref{eq: distance and diam} is satisfied, $H_p(Q)$ must separate $\mcal C_p$ from $\cup_{q\neq p} \mcal C_q$.

Fact 2: \textit{The distance from any point in $\mcal C_q$ from $H_p(Q)$ is bounded} as follows.
For $x \in \mcal C_q$, $q\neq p$, we have that 
\begin{equation}
    \mcal D(x, H_p(Q)) \leq \mcal D(x, x^q) + \mcal D(x^q, H_p(Q)) \leq 2 \max_{q' \in [n]\setminus p} \Diam(\mcal C_{q'}).
\end{equation}

Now, we proceed to the main proof.
For the ease of notation, we simply write $H_p$ for $H_p(Q)$.

\textbf{First}, from assumption \eqref{eq: distance and diam}, we has that for any $p\in [n]$,
    \begin{equation}
        \mcal D(\mcal C_p, H_p) =  \min_{x\in \mcal C_p } \mcal D(x, H_p) = \min_{x\in \mcal C_p } |c^p - \AngleBr{v^p, x}|.
    \end{equation}
    We have that
    \begin{equation} \label{eq: sufficient condition for common point}
    \begin{aligned}
        \min_{x\in \mcal C_p } \mcal D(x, H_p) &>  2 n \max_{q\neq p} \Diam(\mcal C_q) \\ 
        &\geq \sum_{q\in [n]\setminus p} \max_{x\in \mcal C_q} \mcal D(x, H_p).
    \end{aligned}
    \end{equation}

    \textbf{Second}, we shows that how to pick a common point which exists when \eqref{eq: sufficient condition for common point} is satisfied.
    Let us choose a collection of points $x^p \in \mcal C_p$, $p\in [n]$, and define 
    \[
    x^\star = \frac{1}{n} \sum_{p\in [n]}x^p.
    \]
    Now, we show that $x^\star \in E_p, \; \forall p\in [n]$.
    
    For each $p\in [n]$, consider $H_p$.
    We denote
    \begin{equation*}
        \begin{aligned}
                \zeta_{pp} &:= c^p - \AngleBr{v^p,x^p} > 0; \\
                \zeta_{pq} &:= \AngleBr{v^p,x^q} - c^p > 0, \quad q\neq p.
        \end{aligned}
    \end{equation*}
    Note that  $\mcal D(x, H_p) = |\AngleBr{v^p, x} - c^p|$.
    Follows \eqref{eq: sufficient condition for common point}, we have that
    \begin{equation} \label{eq: ineq of slack variables}
        \zeta_{pp} \geq \min_{x\in \mcal C_p } \mcal D(x, H_p) > \sum_{q \in [n]\setminus p} \max_{x\in \mcal C_q} \mcal D(x, H_p)  \geq \sum_{q \in [n]\setminus p} \zeta_{pq}.
    \end{equation}
    Now, let consider
    \begin{equation}
        \begin{aligned}
            \AngleBr{v^p, x^\star} = \frac{1}{n} \sum_{q\in [n]}  \AngleBr{v^p, x^q} &= \frac{1}{n}\sum_{q \in [n]\setminus p}(c^p + \zeta_{pq}) + \frac{1}{n}(c^p -\zeta_{pp}) \\
            &= c^p + \frac{1}{n}\RoundBr{\sum_{q \in [n]\setminus p}\zeta_{pq}  -\zeta_{pp} } < c^p.
        \end{aligned}
    \end{equation}
    Therefore, $x^\star \in E_p$.
    As it is true for all $E_p$, one has that
    \begin{equation}
        x^\star \in \underset{p\in [n]}{\bigcap}E_p.
    \end{equation}

    Finally, by Lemma \ref{lem: separating hyperplanes arrangment 2}, we can conclude that $x^\star$ is a common point.
\end{proof}

\section{Sample Complexity Analysis}
\label{appendix: Sample Complexity}
\begin{proof}[\textbf{Proof of Lemma \ref{lem: Distance Min Singular Value}}]
Denote $\Delta$ as the simplex corresponding to $M= [x^1,...,x^n]$, $\Delta_i$ as the facet opposite the vertex $x^i$, and $h_i(\Delta)$ is the height of simplex w.r.t. the vertex $x^i$.
Denote $\mrm{Vol}_{k}(C)$ as the $k$-dimensional content of $C \subset \Euclidean^{n-1}$, where $\mrm{dim}(C) = k$.
Using simple calculus, one has that
\begin{equation}
    h_i(\Delta) = \frac{1}{n-1}\frac{\mrm{Vol}_{n-1}(\Delta)}{\mrm{Vol}_{n-2}(\Delta_i)},
\end{equation}
We also denote $\hat \Delta$ as the perturbed simplex corresponding to $M+R$ and $\hat \Delta_i$ as the facet opposite the to the perturbation of $x^i$.

we bound the height $h_i(\hat \Delta)$ for all $i \in [n-1]$. For the height $h_n(\hat \Delta)$, one can apply similar reasoning.
Let define the coordinate matrix and the pertubation matrix w.r.t $x^n$ as follows
\begin{equation}
    V:= \coM(M,n), \quad U:= \mrm{coM}( R, n);
\end{equation}
We have that $|U_{ij}| < \epsilon,\; \forall i,j$.
By the definition of width $\Width(M)$, we have that
\begin{equation}
    \SingularValue_{n-1}(V) \geq  \Width(M) \geq \sigma.
\end{equation}

Let define the Gram matrix and perturbed Gram matrix as follows
\begin{equation}
    \begin{aligned}
        G &:= V^\top V \\
        \hat G &:= (V + U)^\top (V+U).
    \end{aligned}
\end{equation}
One has that, 
\begin{equation*}
    \hat G - G = V^\top U + U^\top V + U^\top U := \overline U.
\end{equation*}
One has that $\overline U_{ij} \leq \Bar \epsilon := 3n\epsilon  $, as $|V_{ij}| < 1$ and $|U_{ij}| <\epsilon<1$.
We also has that $\|\overline U\|_2 \leq \|\overline U \|_{\mrm F} \leq n\Bar \epsilon$.

\textbf{First step}. we bound the quantity $\frac{|\mrm{det}(G+\overline U) - \mrm{det}(G)|}{|\mrm{det}(G)|}$.
By \cite{Ipsen2008_PertubationBoundForMatrix}'s Corollary 2.14, one has that
\begin{equation}
    \frac{|\mrm{det}(G+\overline U) - \mrm{det}(G)|}{|\mrm{det}(G)|} \leq \RoundBr{1 +\frac{\|\overline U\|_2}{\SingularValue_{n-1}(G)}}^{n-1} - 1
    \leq  \RoundBr{1+ \frac{ n\Bar \epsilon}{\SingularValue^2}}^{n} - 1.
\end{equation}
As $(1+z)^n \leq \frac{1}{1-nz}$ when $z \in (0,\frac{1}{n})$ and $n>0$. 
One has that 
\begin{equation}
     \frac{|\mrm{det}(G+\overline U) - \mrm{det}(G)|}{|\mrm{det}(G)|} \leq \frac{n^2\Bar{\epsilon}}{\SingularValue^2 - n^2 \Bar \epsilon},
\end{equation}
when $\bar \epsilon \leq \frac{\SingularValue^2}{n^2}$, or $\epsilon \leq \frac{\SingularValue^2}{3n^3}$.
Let us define $k := \frac{\SingularValue^2 - n^2 \Bar \epsilon}{n^2\Bar{\epsilon}}$, one has that
\begin{equation}
     \frac{|\mrm{det}(G+\overline U) - \mrm{det}(G)|}{|\mrm{det}(G)|} \leq \frac{1}{k}.
\end{equation}
It means that
\begin{equation}
    \mrm{det}(G+\overline U) \geq \RoundBr{1-\frac{1}{k}}\mrm{det}(G)
\end{equation}

\textbf{Second step.} we bound the change in content of the $i^{\mrm{th}}$ facets of the simplex, for $i \in [n-1]$. 
Consider the facet that is opposite to the vertex $x^{i}$, and denote $V(i),\; U_i$ as the sub-matrices of $V, \; U$ by removing  $i^{\mrm{th}}$ column.
Denote the Gram matrix
\begin{equation}
    \begin{aligned}
        G(i) &:= V(i)^\top V(i) \\
        \hat G(i) &:= (V(i) + U(i))^\top (V(i) + U(i))
    \end{aligned}
\end{equation}
Note that one can obtain $G(i),\; \hat G(i)$ and by removing $i^{\mrm{th}}$ row and column of $G(i),\; \hat G(i)$ respectively.
Denote $\overline U(i): = \hat G(i) - G(i)$, we has that all entries of $\overline U(i)$ smaller than $\Bar \epsilon$.

Moreover, by Singular Value Interlacing Theorem, one has that
\begin{equation}
    \SingularValue_1(G) \geq \SingularValue_1(G(i)) \geq \SingularValue_2 (G) \geq \SingularValue_2(G(i)) \geq \dots \geq  \SingularValue_{n-2}(G(i)) \geq \SingularValue_{n-2}(G(i)) \geq \sigma_{n-1}(G).
\end{equation}
Similarly, one has that 
\begin{equation}
    \frac{|\mrm{det}(G(i)+\overline U(i)) - \mrm{det}(G(i))|}{|\mrm{det}(G(i))|} \leq \RoundBr{1 +\frac{\|\overline U(i)\|_2}{\sigma_{n-2}(G(i))}}^{n-2} - 1
    \leq  \RoundBr{1+ \frac{ n\Bar \epsilon}{\sigma^2}}^{n} - 1 \leq \frac{1}{k}.
\end{equation}
It means that
\begin{equation}
    \mrm{det}(G(i)+\overline U(i)) \leq \RoundBr{1+\frac{1}{k}}\mrm{det}(G(i)).
\end{equation}

\textbf{Third step.} We bound the height $h_i$ corresponding to the vertices $x^i$ in this step.
For $i\in [n-1]$ one has that
\begin{equation}
\begin{aligned}
    \mrm{Vol}_d(\hat \Delta) &= \frac{1}{(n-1)!}\sqrt{\mrm{det}(G+\overline U)}. \\
    \mrm{Vol}_{d-1}(\hat \Delta_i) &= \frac{1}{(n-2)!}\sqrt{\mrm{det}(G(i)+\overline U(i))}.
\end{aligned}
\end{equation}
Furthermore, by the Eigenvalue Interlacing Theorem, we have $\mrm{det}(G)/\mrm{det}(G_i) \geq \SingularValue_{n-1}(G) \geq \SingularValue^2$. 
Putting things together, one has that
\begin{equation}
\begin{aligned}
    h_i(\hat \Delta) &= \frac{1}{n-1}\frac{\mrm{Vol}_{n-1}(\hat\Delta)}{\mrm{Vol}_{n-2}(\hat\Delta_i)} 
    = \sqrt{\frac{\mrm{det}(G+\overline U)}{\mrm{det}(G(i)+\overline U(i))}} 
    \geq \sqrt{\frac{k-1}{k+1} \frac{\mrm{det}(G)}{\mrm{det}(G(i))}} 
    \geq \SingularValue \sqrt{\frac{\SingularValue^2 - 6n^3\epsilon }{\SingularValue^2}}
\end{aligned}
\end{equation}
We note that, the above holds true for $i\in [n-1]$.

\textbf{Fourth Step.}
Now, we bound the height corresponding to the vertex $x^n$. 
We can define the coordination matrix and pertubation matrix w.r.t $x^1$  as follows.
\begin{equation}
    V' = \coM(M, 1),\quad U' = \coM(R, 1).
\end{equation}
Note that, by the definition of the width, we have that
\begin{equation}
    \SingularValue_{n-1}(V') \geq \Width(M) \geq \sigma;
\end{equation}
and also, $|U'_{ij} | \leq \epsilon$. Similarly, applying Steps 1-3, we also have that
\[h_n(\hat \Delta) \geq  \sqrt{\SingularValue^2 - 6n^3\epsilon }\]

Therefore, $h_i(\hat \Delta) \geq  \sqrt{\SingularValue^2 - 6n^3\epsilon }$ holds true for all $i\in [n]$.
We conclude that
\begin{equation}
    h_\mrm{min} \geq  \sqrt{\SingularValue^2 - 6n^3\epsilon},
\end{equation}
whenever, $\epsilon \leq \frac{\SingularValue^2}{3n^3}$.
\end{proof}


\begin{proof}[\textbf{Proof of Theorem \ref{thm: Sample complexity of algorithm}}]
Note that $|\mcal P| = n$.
Denote $\epsilon_0 := 2\max_{p \in [n]} \Diam(\mcal C_p)$.
For any $\Permutation_p \in \mcal P$, define the event 
\[\mcal E = \CurlyBr{\phi^{\Permutation_p} \in \mcal C_p, \forall p\in [n]}\]
By the construction of the confidence set, we guarantee that $\mcal E$ happen with probability at least $1-n^2\delta$.

Consider $p \in [n]$, for any $q \in [n]\setminus p$, let $x^q$ be the projection of $\phi^{\Permutation_q}$ onto $H_p$, and $x^p := \argmin_{p\in \mcal C_p} \mcal D(x,H_p)$.
We have that
\[\mcal D(x^k, \phi^{\Permutation_k}) \leq \epsilon_0, \; \forall k \in [n].\]
We need to bound $\mcal D(x^p, H_p)$ by bounding the minimum height of simplex $\Conv{\{x^p\}_{p\in [n]}}$, which is a pertubation of $\Conv{\{\phi^{\Permutation_p}\}_{p\in [n]}}$.

Define matrix $M  = [\phi^{\Permutation_p}]_{p\in [n]}$, and $\hat M = [x^p]_{p\in [n]}$. 
Let $R:= M-\hat M$ be the perturbation matrix, one has that $R_{ij} \leq \epsilon_0,\; \forall (i,j)$.
By Lemma \ref{lem: Distance Min Singular Value}, we have that
\begin{equation}
    \mcal D(x^p, H_p) \geq \sqrt{\SingularValue^2 - 12n^3\epsilon_0}
\end{equation}

Therefore, for $\mcal D(x^p, H_p) \geq n\epsilon_0$ holds , it is sufficient to provide the condition for $\SingularValue$ such that
\begin{equation}
    \sqrt{\sigma^2 - 12n^3\epsilon_0} \geq n \epsilon_0.
\end{equation}

Assuming that $\epsilon_0 < 1$, for the condition of Lemma $\ref{lem: Distance Min Singular Value}$ and the above inequality to hold, it is sufficient to choose 
\[
\epsilon_0 = \frac{\sigma^2}{13n^3}.
\]

Now, we calculate the upper bound for sample needed. At epoch $K$, we have that
\begin{equation}
\begin{aligned}
    \epsilon_0 &= 2 \Diam(\mcal C_p) \geq 4\sqrt{\frac{2n\log(\delta^{-1})}{K}} \\
    \sigma &= \frac{n\varsigma }{c_W}.
\end{aligned}
\end{equation}
Then we have $K = O\RoundBr{ \frac{n^{13} \log(n\delta^{-1} \varsigma^{-1})}{\varsigma^4}}$.
As each phase, there are at most $n^2$ queries, then the total number of sample needed is
\begin{equation}
    T = O\RoundBr{\frac{n^{15} \log(n\delta^{-1} \varsigma^{-1})}{\varsigma^4}}
\end{equation}
for the algorithm to return a common point, with probability of at least $1-\delta$.
\end{proof}

\begin{proof}[\textbf{Proof of Theorem \ref{thm: Sample complexity of algorithm 2}}]
    The proof is identical to that of Theorem \ref{thm: Sample complexity of algorithm}, with the width of the simplex bounded by $\Width(W) \geq \frac{n \varsigma}{c_W}$.
\end{proof}

\section{Examples of Strictly Convex Games}
\label{appendix: example of strict convex game}
Consider a facility-sharing game (a generalisation of cost-sharing games~\cite{ambec2008sharing,aadland1998shared}) where joining a coalition $S$ would provide each player of that coalition a utility value $v(k)$ where $|S| = k$, and they have to pay an average maintenance cost $c(k)$. The expected reward of $S$ is defined as $\mu(S) = v(k) - c(k)$, representing the average utility of its coalitional members. 
This setting represents many real-world scenarios, such as:
\begin{itemize}
    \item University departments together plan to set up and maintain a shared computing lab. The value of using the lab is the same $v(k) = v$ for each department (e.g., their students can have access computing facilities), but the maintenance cost $c(k)$ is monotone decreasing and strictly concave (e.g., the more participate the less the average maintenance cost becomes). An example to such maintenance cost function is, e.g., $c(k) = C_1 - C_2k^\alpha$, where $\alpha > 1$ and $C_1, C_2$ are appropriately set constants such that the total maintenance cost $kc(k) = C_1k - C_2k^{(\alpha +1)}$ is non-negative and monotone increasing in the $[0,n]$ range ($n$ is the total number of departments). A coalition $S$ here represents the cooperating departments.
    
    \item An international corporate is expanding its international markets via mutual collaborations with multiple local companies. The cost for each potential local member company to join the consortium is fixed $c$ (e.g., integration cost), while the benefit they can gain through this collaboration, $v(k)$, is a strictly monotone convex function in the number of local markets $k$ (assuming that each local partner is in charge of their own local market), due to the potential synergies between different markets. An example benefit/utility function is, e.g., $v(k) = k^\alpha$ with $\alpha > 1$. The corporate's task is then to invite local companies to join their coalition/consortium $S$.

    \item (An alternative version of airport games) Airlines decide whether they launch a flight from a particular airport. The more airlines decide to do so, the higher value $v(k) $ (e.g., more connection options), and the lower average buy-in cost $c(k)$ (e.g., runway maintenance, staff cost etc.) each airlines can have. It's reasonable to assume that $v(k)$ is strictly convex and monotone increasing (e.g., the number of connecting combinations grows exponentially) and $c(k)$ is monotone decreasing and strictly concave. Those airlines who decide to invest into that airport will form a coalition.

\end{itemize}

For each of the scenarios above, we can see that the expected reward function $\mu(S) = v(|S|) - c(|S|)$ is indeed strictly supermodular. 
The reason is that $v(k)$ and $c(k)$ are discrete and finite on $[0,n]$, and thus, we can easily find $\varsigma > 0$ for which these games also admit $\varsigma$-strict convexity.

\section{Further Discussions}
\subsection{Comparison with Pantazis \emph{et al.}~\cite{Pantazis2023_ExpectedCore}}
\label{Appendix: Discussion}
While the algorithm in \cite{Pantazis2023_ExpectedCore} is proposed for general cooperative games and conceptually applicable to the class of strictly convex games, we argue that their algorithm is not statistically and computationally efficient when applied to strictly convex games, due to the absence of a specific mechanism to leverage the supermodular structure of the expected reward function.
In particular, firstly, we argue that without any modification and with bandit feedback, their algorithm would require a minimum of $\Omega(2^n)$ samples. 
Secondly, although we believe the framework of \cite{Pantazis2023_ExpectedCore} could be conceptually applied to strict convex games, significant non-trivial modifications may be necessary to leverage the supermodular structure of the mean reward function.

\paragraph{Appplying  \cite{Pantazis2023_ExpectedCore} to strictly convex games without any modifications.} 
We first briefly outline their algorithmic framework. 
In this paper, the authors assume that each coalition $S \subset N$ has access to a number of samples, denoted as $t_S > 1$. For each coalition $S$, the empirical mean is denoted as $\overline{\mu}_{t_S}(S)$, and a confidence set for the given mean reward is constructed, denoted as,
\[
\mathcal C(\mu(S)) = \left\{ \hat\mu(S) \in [0,\;1] \mid |\hat\mu(S) - \overline \mu_{t_S}(S)| \leq \varepsilon_{t_S}  \right\},\; \text{for some $\varepsilon_{t_S} >0$ .}
\] 
We note that while the algorithm in \cite{Pantazis2023_ExpectedCore} constructs the confidence set using Wasserstein distance, in the case of distributions with bounded support, we can simplify it by using the mean reward difference.
After constructing the confidence set for the mean reward of each coalition, the algorithm solves the following robust optimization problem:
\begin{equation*}
    \begin{aligned}
        \min_{x\in \mathbb R^n}\;  &\|x\|_2^2 \\
        \text{s.t.} \; &x(N) = \mu(N) \\
        & x(S) \geq \sup(\mathcal C(\mu(S)),\quad \forall S\subset N.
    \end{aligned}
\end{equation*}
That is, finding the stable allocation for the worst-case scenario within the confidence sets.
It is clear that when directly applying this framework to the bandit setting, each coalition must be queried at least once, that is $t_S>1$. 
This inevitably leads to a complexity of $\Omega(2^n)$ samples, regardless of the sampling scheme one employs.
In term of computation, with $2^n-2$ confidence sets for all coalitions $S\subset N$, tabular representation of the confidence set incurs extreme computational cost.

\paragraph{Significant modifications required for \cite{Pantazis2023_ExpectedCore}.}
As described above, the algorithm in \cite{Pantazis2023_ExpectedCore} suffers from $2^n$ sample complexity, and the main reason is because it requires constructing confidence sets for the mean reward for all coalitions $S\subset N$.
As such, if we want to apply their algorithm efficiently to the bandit setting, we need to address this limitation. 

To do so, one may need to develop an approach to design a confidence set for a specific class of strictly convex games. 
For instance, we can consider the following approach: Given historical data, instead of writing a confidence set for each individual coalition, let us define a confidence set for the mean reward function as follows:
\begin{equation}
    \mathcal C(\mu) = \left\{\hat \mu: 2^N \rightarrow [0,\;1] \mid 
    \hat \mu \in 
            [\mathcal C(\mu(S)) ]_{S\subset N}, 
    \;
    \text{$\hat \mu$ \textit{is strictly supermodular}}
    \right\};
\end{equation}
where the confidence set $\mathcal{C}(\mu(S))$ could potentially be $[0, 1]$ for some coalition $S$, as there is no data available for these coalitions.
Let $\mathrm{Core}(\hat{\mu})$ be the core with respect to the reward function $\hat{\mu}$.
We propose a generalization of the framework from the robust optimization problem to adapt to the structure of the game as follows.
\begin{equation}
    \begin{aligned}
        \min_{x\in \mathbb R^n}\;  &\|x\|_2^2 \\
        \text{s.t.} \; &x(N) = \mu(N) \\
        &  x \in \bigcap_{\hat \mu \in \mathcal C(\mu) } \mathrm{Core}(\hat \mu).
    \end{aligned}
\end{equation}
That is, we find a stable allocation $x$ for every possible supermodular function within the confidence set of the reward function.

However, implementing and analyzing this approach may pose significant challenges. 
The first challenge lies in constructing a tight confidence set $[\mathcal{C}(\mu(S))]_{S\subset N}$ such that all functions within this collection are strictly supermodular. 
We are not aware of a method to explicitly construct $[\mathcal{C}(\mu(S))]_{S\subset N}$ containing only strictly supermodular functions, and we believe this set could potentially be very complicated.
To illustrate, consider the scenario where we have samples from two coalitions, $\{1\}$ and $\{1,2\}$, with the following empirical means:
\[
\overline{\mu}(\{1\}) = 0.11;\quad
\overline{\mu}(\{1,2\}) = 0.1
\]
This situation might occurs when the number of samples is insufficient. In such cases, regardless of the value chosen for the remaining coalition rewards in the function $\overline{\mu}(S)$, $\overline{\mu}(S)$ is not supermodular (as $\{1\} \subset \{1,2\}$, yet $\overline{\mu}({1}) > \overline{\mu}({1,2})$). 
Consequently, either the confidence set $\mathcal{C}(\mu({1}))$ or $\mathcal{C}(\mu({1,2}))$ does not contain the empirical mean reward, indicating the highly complicated shape of the confidence set.

The second challenge is that while computing a stable allocation for a given supermodular reward function $\hat{\mu}$ is a straightforward task, computing a stable allocation for all supermodular reward functions in the confidence set $\mathcal{C}(\mu)$ in a computationally efficient way is an open problem, to the best of our knowledge.

The discussion above also highlights the key difference between our work and that of \cite{Pantazis2023_ExpectedCore}:  Instead of explicitly constructing the confidence set of the expected mean reward function to integrate the supermodular structure for computing a stable allocation, which might be a sophisticated task, we directly exploit the geometry of the core of strictly convex games. Specifically, in strictly convex games, each vertex of the core corresponds to a marginal vector with respect to some permutation orders. Given that one can construct the confidence set of marginal vectors easily, our method is conceptually and computationally simpler. However, we believe that adopting the more general framework of robust optimization as presented in \cite{Pantazis2023_ExpectedCore} is a very interesting, but non-trivial, direction, and we leave it for future work.

\subsection{Comment on Lower Bounds}
\label{app-subsec: lower bound}
It is also crucial to develop the lower bound for the class of strict convex game. 
One promising direction is to extend the game instances in Theorem \ref{Thm: ImposibilityLowDimCore}. However, there could be several technical issues when it comes to deriving a meaningful lower bound for strictly convex games.
The main problem is twofold: 
Firstly, not every small perturbation of the face game may result in a strictly convex game; therefore, careful tailoring to ensure strict convexity is required. 
Second, to show a polynomial dependence of sample complexity on $n$, we need to generalise the two game instances in Theorem \ref{Thm: ImposibilityLowDimCore} into $\mathrm{poly}(n)$ game instances for the information-theoretic argument. 
It is not clear how to construct them so that their core has no intersection, and the statistical distance of the reward can be upper bounded. This can be further detailed as follows:

Firstly, as proven in \cite{Julio_CoreConvexGame}, a face game is only guaranteed to be a convex game, and not all perturbations of it can result in a strictly convex game.
In fact, we believe great care is required to ensure that the perturbations of the face games are strictly convex, thereby allowing them to be used to derive lower bounds for strictly convex games.
Moreover, even if one can guarantee that the perturbation of the two face-game instances in Theorem \ref{Thm: ImposibilityLowDimCore} is strictly convex, they can only result in a very loose lower bound.
Particularly, The two game instances in Theorem \ref{Thm: ImposibilityLowDimCore} are originally constructed using two strictly convex games $G_0$ and $G_1$, whose expected rewards differ in \textit{only one coalition}, denoted as $C\subset N$.
This setup simplifies the computation of the statistical distance of the face-games which are perturbations of $G_0$ and $G_1$ \textit{corresponding to the same coalition} $C$.
However, since only two game instances are used to derive the result of Theorem \ref{Thm: ImposibilityLowDimCore}, if we employ fully-dimensional-core perturbed game instances of them, the resulting lower bound will be independent of the dimension $n$.
In other words, the finite-sample lower bound can be very loose and does not show any dependence on the dimension $n$.

Secondly, generalising the approach to $\mathrm{poly}(n)$ game instances is not straightforward, as it requires choosing $\mathrm{poly}(n)$ coalitions to perturb the rewards of the original game $G_0$, not just one coalition. 
This results in significant differences in the expected reward functions of the corresponding face games, as each face game is a perturbation with respect to several different coalitions. 
Consequently, upper-bounding the pairwise KL distance between these games is highly nontrivial and would require sophisticated exploitation of the structure of strictly convex games.

\end{document}